\documentclass[submission,copyright,creativecommons]{eptcs}
\providecommand{\event}{FROM 2026} % Name of the event you are submitting to

\usepackage{MnSymbol} 
\usepackage{verbatim}
\usepackage{amsmath}
\usepackage{amsthm}
\theoremstyle{definition}
\newtheorem{definition}{Definition}
\newtheorem{theorem}{Theorem}
\newtheorem{lemma}{Lemma}

\usepackage{iftex}

\ifpdf
\usepackage{underscore}         % Only needed if you use pdflatex.
\usepackage[T1]{fontenc}        % Recommended with pdflatex
\else
\usepackage{breakurl}           % Not needed if you use pdflatex only.
\fi

\usepackage{coqdoc}
\def\repository{\url{https://www.risc.jku.at/people/tjebelea/Dramnesc-Jebelean-Stratulat-FROM-2026.html}}
\newcommand{\mma}{{\em Mathematica}}
\newcommand{\tma}{{\em Theorema}}

\newcommand{\fa}[1]{{\underset{#1}{\forall}}}
\newcommand{\cfa}[2]{{\underset{#2}{\underset{#1}{\forall}}}}

\newcommand{\mse}[1]{\{\!\{#1\}\!\}} % multiset with specified elements
\newcommand{\ms}{{\cal M}}

\newcommand{\el}{\langle\rangle} % empty list
\newcommand{\cons}{\smallsmile}
\def\Tail{{\em Tail}}
\newcommand{\append}{\smallfrown}
\newcommand{\concat}{\asymp}
\newcommand{\head}{\textit{head}}
\newcommand{\last}{\textit{last}}

\newcommand{\patsortI}{\textit{PatSort1}}
\newcommand{\patsortII}{\textit{PatSort2}}
\newcommand{\patsortIII}{\textit{PatSort3}}
\newcommand{\selsorted}{\textit{SelSorted}}
\newcommand{\selrest}{\textit{SelRest}}
\newcommand{\patcomb}{\textit{PatComb}}
\newcommand{\patjoin}{\textit{PatJoin}}
\newcommand{\patmix}{\textit{PatMix}}
\newcommand{\patsplit}{\textit{PatSplit}}
\newcommand{\merge}{\textit{Merge}}
\newcommand{\issorted}{\textit{IsSorted}}
	\title{Certification of Bilateral Patience Sort\\ in \tma\ and Rocq}
	
	\def\titlerunning{Certification of Patience Sort}
	
	\author{
		Isabela Dr\u amnesc
		\institute{Dept. of Computational Sciences and Artificial Intelligence} \institute{West University of Timi\c soara, Romania}  \email{Isabela.Dramnesc@e-uvt.ro}
		\and
		Tudor Jebelean
		\institute{RISC, Johannes Kepler University, Linz, Austria}
		\institute{ICAM, West University of Timi\c soara, Romania} %\url{https://www3.risc.jku.at/home/tjebelea}
		\email{Tudor.Jebelean@e-uvt.ro}
		\and
		Sorin Stratulat
    		\institute{Université de Lorraine, CNRS, LORIA}
            \institute{F-57000 Metz, France}
	\email{sorin.stratulat@univ-lorraine.fr}
}

\def\authorrunning{I. Dr{\u a}mnesc, T. Jebelean \& S. Stratulat}
\begin{document}
	%\Copyright{Isabela Dr\u amnesc and Tudor Jebelean and Sorin Stratulat}
	%\ccsdesc[500]{Theory of computation~Program verification}
	%\keywords{formal verification; sorting algorithms; Theorema; Rocq}

	%%%%%%%%%%%%%%%%%%%%%%%%%%%%%%%%%%%%%%%%%%%%%%%%%%%%%%
	
	\maketitle
	
	\begin{abstract}
		This is a case study on a specific version of the Patience Sort algorithm in which we illustrate the evolution of it from an intuitive but inefficient nested recursion into a more complex but more efficient tail recursion, together with its formal implementation  and certification in \tma\ and Rocq (formerly Coq).
		We identify some general principles of algorithm transformation, and we develop the necessary background theory and proof methods.
		As a significant distinctive aspect, the approach in \tma\ uses multisets, which simplifies the whole process and makes it more intuitive.
		The certification process reveals significant differences between the \tma\ and Rocq approaches, about which we provide a comparative analysis with respect to algorithm definition, proof development, and proof effort. This analysis offers insights into how algorithm design influences the complexity and structure of formal proofs and demonstrates how non-trivial algorithms can be effectively verified across different formal frameworks.
		
	\end{abstract}
	
	\section{Introduction} %ISA (2 pages)
	
	Sorting is one of the most fundamental problems in computer science, with applications ranging from data processing and optimization to algorithm design and formal reasoning. In addition to its classical role, sorting also plays an important role in modern domains such as robotics, where ordered data structures are required for tasks like sensor data processing, prioritization of candidate actions, trajectory selection, and real-time decision-making.
    Since efficient and correct ordering mechanisms are essential for ensuring reliable system behavior, ensuring the correctness of sorting algorithms is crucial. Even minor implementation errors may lead to incorrect outputs that propagate through higher-level components. This is particularly critical in safety-sensitive systems, including autonomous and robotic platforms, where incorrect decisions based on improperly processed data can have significantly negative consequences.
	
	In this case study we demonstrate the use of \tma\ and Rocq (formerly Coq) for expressing several versions of the Patience Sort algorithm together with the corresponding formal correctness proofs. These systems enable the development of certified implementations whose correctness is formally guaranteed.

    Moreover we study the systematic transformation of the obvious intuitive but inefficient version of the algorithm into a tail-recursive version.
    Tail recursion improves in general the efficiency of the execution because it does not need a stack for the management of intermediate results and function calls. Moreover such algorithms are transformed by good compilers into iterative algorithms.  
	
	The \tma\ system \cite{Buchberger97asurvey,theorema-2,theorema-2-windsteiger} is a framework built upon \textit{Mathematica}\footnote{\url{www.wolfram.com/mathematica}} that supports the processes of defining mathematical theories, including definition of algorithms by logical formulae, experimenting by running the algorithms, and developing and using mechanical provers by specifying inference rules and strategies.
	The system facilitates the certification of algorithms because their implementation in \tma\ does not use a programming language, rather they can be defined directly in predicate logic together with their correctness conditions.
    For this case study we constructed a specific theory and implemented a prover for lists over an arbitrary ordered domain.
	A distinctive feature of the \tma\ system is the use of natural style (similar to human) to express logical formulae and the algorithms, for the inference rules of the provers, and for the presentation of the proofs.
    This makes both the underlying theories as well as the proofs easy to read and understand by humans.
	
	The Rocq system~\cite{coq-manual,coq} is a state-of-the-art proof assistant that supports both the formalization of algorithms and the verification of their properties within a unified framework. It is based on the Calculus of Inductive Constructions, a higher-order type theory that enables expressive specifications and constructive reasoning.  Following the Curry--Howard correspondence~\cite{Howard80}, propositions are represented as types, while proofs correspond to terms inhabiting these types. A key feature of Rocq is its small trusted kernel, which implements a type-checking algorithm to verify that a proof term indeed has the type corresponding to the property it claims to establish. This mechanism ensures a high degree of reliability, making Rocq a widely adopted tool for the certification of algorithms and formal reasoning.
	
	While a substantial work exists on formally verified sorting algorithms, most efforts focus on well-known algorithms such as Insertion Sort, Merge Sort, or Quick Sort. Comparatively, less attention has been given to structurally richer algorithms such as Patience Sort\footnote{We did not find any reference to a mechanized certification of it in the literature.}, which is very important in the context of longest increasing subsequences and combinatorial properties of permutations~\cite{aldous1999longest}, and whose layered structure and non-standard operational principles make it an interesting subject for formal analysis.
    Moreover, from the efficiency point of view, this algorithm is very interesting as, in contrast to all other ones, its recursive operation does not reduce all sublists to small fixed size ones. 
	
	Furthermore, there is a lack of comparative studies investigating how different formal systems support the certification of non-trivial algorithms, both in terms of expressiveness and proof effort. In particular, the interplay between algorithm design and formal reasoning across distinct frameworks remains underexplored.
	
	In this paper, we address these gaps by studying some novel versions of the Patience Sort algorithm and formally certifying their correctness in two different systems: Theorema and Rocq.
    Patience Sort \cite{Mallows73} decomposes the input in a finite number of sorted sublists by adding one-by-one the elements of the input at the beginning of the sublist where it is smaller than the first element (new sublist if necessary!), and then merges the sublists.
    Our specific version is a simpler binary divide and conquer: only one sorted sublist is produced, but by {\em bilaterally} adding the element either at the beginning or at the end if it is smaller than the first, respectively bigger than the last element\footnote{This approach has the potential to improve the efficiency because it produces larger and fewer sublists.}, and then the remaining sublist is sorted recursively in the same way and merged with the sorted one.
    Our approach highlights both the algorithmic aspects of the proposed method and the methodological differences in formal certification.
	Moreover this constitutes a case study illustrating the principles of algorithm transformation into a more efficient recursion scheme.
	
	The relevant files in \tma\ and Rocq are publicly available at \repository\ (we refer to this throughout the paper as {\em the companion repository}).%{\url{https://www.risc.jku.at/people/tjebelea/Dramnesc-Jebelean-Stratulat-FROM-2026.html}}.
	\subsection*{Related Work} 
	Formal certification environments like Rocq~\cite{coq} and \sloppy{Isabelle/HOL~\cite{isabelle}} have been used to verify classical algorithms on arrays/lists~\cite{FilliatreMagaud99,Tushkanova:2009aa,Petrovic:2014aa,Zhang2018,Burgos:2019aa,Bockenek:2019aa,Quarfot-Orrevall:2020aa,Sternagel-2013}.
	Other approaches use Krakatoa~\cite{Marche:2004aa} in~\cite{Tushkanova:2009aa}, SPARK*~\cite{Hoang:2015aa} in~\cite{Baity:2021aa}, PVS~\cite{Shankar:2020ww} in~\cite{bitonic-sort}, KeY~\cite{key} in \cite{Bove-2006}, and ACL2~\cite{Kaufmann:1999ys} in~\cite{ACL2-in-place-Quicksort}. These approaches differ in their underlying logics, proof mechanisms, and degrees of automation.
	
	Across these works, sorting algorithms are typically formalized in a functional style, with definitions that are largely syntactic variants of common core concepts, although some approaches adopt imperative formulations~\cite{Beckert-2017,Safari-2020,Gouw-2016}, leading to different verification challenges.
	
	A key aspect of correctness proofs is establishing that the output is a permutation of the input. This notion admits multiple formalizations, including element removal~\cite{bb-lazy}, index-based bijections~\cite{Jiang-2017}, element counting~\cite{Certezeanu-2016}, filtering-based comparisons~\cite{Georgiou-2024}, and multiset-based approaches~\cite{Burgos:2019aa}. These choices significantly impact the structure and complexity of the proofs, often making them more involved than those for sortedness. Proofs also differ in their inference rules, strategies, and reliance on auxiliary definitions and lemmas, affecting proof length, automation, and readability.
	We follow a distinct specific approach to this topic, namely the use of {\em multisets}. This facilitates the expressing of the fact that two lists have the same elements, and also eases the extension of ordering and inclusion relations to lists and multisets, which opens the possibility of introducing very efficient inference rules for proving over this domains, in particular for the certification of sorting algorithms.
	
	Natural-style formal reasoning, where definitions and proofs resemble traditional mathematical practice, is addressed in relatively few works, including~\cite{isa-jsc2015,jlamp,tokyo,bb-lazy}.
	
	Despite the extensive literature, there is no prior work in Rocq addressing the verification of the class of algorithms considered here, and comparative studies between systems with different paradigms, such as Theorema and Rocq, are still relatively rare. Our work in this paper fills this gap.

    Concerning the automatic transformation of recursive algorithms into tail-recursive ones, the most relevant papers agree that this not yet achieved.
    In \cite{NISHIDA201453} we find an algebraic method for transforming recursive programs into tail-recursive ones, which is also proven correct, however this method works only for a subclass of programs, and the authors consider that the generation of accumulators is not yet automated.
    The survey \cite{Maghawry-2019}, which focuses on the use of genetic algorithms, and the older survey \cite{Visser2005} present various techniques that can be used for program transformation, but it does not address a comprehensive automation process for the whole class of recursive algorithms.
    Finally \cite{PoloLavalleMolina2023} presents an Agda \cite{agda} implementation of a transformation into tail-recursive programs for a certain simple subclass of recursive algorithms.
    All the concrete examples presented in these papers are very simple (like e.g. reverse of a list) and cannot be really applied to the complex functions that are necessary for sorting.

    Nevertheless the latter work is also interesting as it uses Agda, which is, like \tma\ and Rocq, a system allowing both the definition, execution, and certification of algorithms.
    In particular, Agda is used in an interesting unpublished case study\footnote{Owen Stephens, {\em Agda Patience Sort}, \url{https://www.owenstephens.co.uk/blog/2015/07/06/agda-patience-sort.htm}} for the specification and partial certification of Patience Sort, but only few simple properties are proven.
    
    %\marginpar{TODO Tudor}
	
	\paragraph{The novelty of our paper consists in:}
	\begin{itemize}
		\item We present three new versions of the Patience Sort algorithm, ranging from a straightforward intuitive but inefficient approach to a tail-recursive efficient one.
		\item We illustrate using this case study some principles of algorithm transformation from nested recursion to more efficient tail recursion.
		\item We develop in \tma\ a comprehensive theory of lists and of multisets over ordered domains, as well as a dedicated prover with both general and specific inference rules.
		\item We provide formal correctness certifications of the algorithms in both Theorema and Rocq.
        \item We analyze comparatively the two approaches, focusing on expressiveness, proof structure, and proof effort.
        \item We introduce a novel method for stating the correctness
        of tail-recursive functions, using existential quantification.
	\end{itemize}
	It is not our goal in this paper to estimate the running time\footnote{Both \tma\ and Rocq provide direct execution of programs defined as logical formulae, which we use for immediate testing and for demonstrative purposes. Some concrete running examples are presented in the companion repository.} of the algorithms or to investigate their time and space complexity, because we focus on the principles of algorithm transformation and on the certification techniques, including the development of the appropriate theories and proof methods.
	
	\section{The Patience Sort Algorithm} %Tudor (2 pages)
	
	Our reason for choosing this particular algorithm stems from a natural intuitive approach that fits interestingly in the landscape of the most important sorting algorithms (see also \cite{Darlington78}).
	Namely one can classify these algorithms into {\em selection}-based and {\em split}-based.
	Selection algorithms proceed by selecting one element of the input list (or a small constant number), sorting the rest, and then adding the element to the sorted sublist.
	Split algorithms (also known as {\em divide-and-conquer}) proceed by splitting the input into two sublists, sorting them, and the combining the results into the final sorted list.
	In both cases two approaches may occur:
	easy decomposition
	% (take first element, or last, respectively simply break the list in the middle)
	but clever composition
	% of the sorted sublists (insert the element into the right place, respectively merge the sublists)
	(insert-sort, merge-sort)
	as opposed to clever decomposition
	% (take maximum, or minimum, respectively separate elements smaller/bigger than a pivot)
	which then needs only a simple composition
	% (place the element at the beginning, or at the end, respectively concatenate the sublists)
	(max/min-sort\footnote{In \cite{our-LPAR-2024} we show that bubble-sort can be obtained as a tail-recursive version of min/max-sort.}, quick-sort).
	
	In \cite{our-synthesis} we discover an approach that is different from all of the above:  divide-and-conquer using both clever decomposition and composition, but in such a way that {\em only one } of the sublists needs to be sorted! Decomposition is performed by scanning the input list and adding each element at the beginning (or end) of an accumulating sorted sublist when this is possible, or placing it into the second sublist otherwise.
	Composition is performed by the known {\em Merge}\footnote{Merging two sorted lists into a sorted one is performed by scanning the inputs in parallel and moving at each step the smallest of the two elements at the end of resulting list.} method after sorting recursively the second sublist.
	
	The {\em Patience Sort} algorithm \cite{Mallows73} (attributed to 
	A. S. C. Ross in \cite{Mallows63}) also known as Floyd's game is actually a little more complex than the one we present here.
	Namely, this constructs as many sublists as necessary by moving each scanned input element at the beginning of the first sublist where it is smaller than the first one, or starts a new sublist if there is no suitable sublist.
	Composition is performed by a $k$-ary version of {\em Merge}: at each step the smallest of the heads of all the sublists is moved to the accumulating final sorted list. This can be expressed in fact as a tail-recursive algorithm, which exhibits however some differences from \patsortIII\ that we obtain below.
	
	\subsection{Basic Notions and Notations}

	We consider multisets and lists over a {\bf totally ordered domain} (notation $<$ and $\leq$), whose elements are denoted by $a, b, c, \ldots$.
	
	{\bf Lists} are denoted by $U, V, T,X,Y,Z\ldots$.
	By $\el$ we denote the empty list and by
	$a \cons U$ the list having head $a$ and tail $U$.
	The functions \head, \Tail, \last, {\em append} ($\append$), and {\em Concat} ($\concat$) are considered to be executable in one step\footnote{In the background theory \last, {\em append}, and {\em concat} are defined in a functional way as linear time algorithms, however in an efficient array-based computing environment they can be implemented in constant time.}. {$\cons$}, $\el$, and {$\append$} can be used for pattern matching.
    
    {\bf Tuples} are nonempty lists of lists and are denoted using the constructor $\langle \ldots \rangle$ like in $\langle U, V\rangle$. Tuples have only few elements and are used intermediary in some algorithms in order to group few values as a multiple result or a multiple argument. 
	
	{\bf ~Multisets} are denoted by $A, B, C,\ldots$.
	By $\emptyset$ we denote the empty multiset, by $\mse{a,\ldots,b}$ the multiset containing $a,\ldots,b$, and by $\ms[U]$ the multiset containing exactly the elements of the list $U$ with the same number of occurrences. The additive union of $A,B$  is denoted by $A\uplus B$ like in \cite{Knuth:1998:ACP:280635}.
	
	Functions are denoted by special symbols or by names. The names of the functions that return simple domain elements start with lower case letters, while the names of the functions returning lists and multisets start with capital letters.
	
	The total ordering between the elements is extended between individual elements and lists and multisets, as well as between lists and between multisets, by requiring that each element belonging to each argument of the predicate observes the respective relation.
	These orderings are used in proofs of sortedness and should not be confused with the orderings between lists and between multisets that are determined by [strict] inclusion, and which are used for induction.
	
	The definitions of notions and their properties are presented in \tma\ syntax (which is used by the \tma\ system for computations and proving) in a theory file\footnote{The file {\em Theory.nb} in the subdirectory {\em Theorema} of the companion repository.} that we develop for the purpose of studying algorithms on lists.
	Currently the theory contains 5 definitions of predicates (strict inclusion of multisets, ordering of multisets and of lists, list inclusion, and sorting of lists), one definition of a function (the multiset of a list), 14 definitions of algorithms (each with concrete running tests), as well as 38 properties of these notions. These cover 25 pages when printed.
	
	Concerning the \tma\ notations, for function and predicate application we use squared brackets (e.g., $F[x], P[x]$). Quantified variables are written under the quantifier 
	(e.g. $\underset{X }{\forall }$ ``for all $X$'',
	$\underset{X, Y }{\exists }$ ``exists $X, Y$''), and sometimes they are accompanied by a condition (e. g.
	$\cfa{a,b}{a\leq b}$ ``for all $a, b$ such that
	$a\leq b$'' ). The names of the Skolem constants use the name of their ancestor variables, but have integer indices (e. g. $U_0,$ $a_1$).

    In \tma\ the definition of the functions that compose the algorithm are presented as universally quantified logical [conditional] equalities defining [conditional] rewrite rules.
	The arguments of the LHS\footnote{Left-Hand-Side.} may be  variables or {\em list patterns}.
    Pattern matching in algorithm definitions is a specific distinctive feature of \tma\ (inspired by \mma) that simplifies the description of functions but still keeping the semantics of first--order logic. For the algorithms presented here we only use pattern matching on Cons ($\cons$), Append ($\append$), and $\langle\ldots\rangle$. For instance:\\
    \centerline{matching $1\cons(2\cons(3\cons\el))$ with $a\cons U$ gives $a = 1$ and $U = 2\cons( 3\cons\el)$}\\
    \centerline{matching $1\cons(2\cons(3\cons\el))$ with $V\append b$ gives $b = 3$ and $V = 1\cons( 2\cons\el)$}\\
    \centerline{matching $\langle\ 1\cons(2\cons\el),\ 3\cons(4\cons\el)\ \rangle$ with $\langle U, V\rangle$ gives $U = 1\cons(2\cons\el)$ and $V = 3\cons(4\cons\el)$}\\
    
	Each function may have several defining clauses, with the obvious restriction that each possible input matches exactly one clause.
	
	\subsection{The Straightforward Algorithm}
	
	As briefly discussed above, the method consists in scanning the input and distributing the elements between a sorted sublist (by adding the current element at the beginning or at the end of it if it fits) and the rest of the list, which is sorted recursively before the two lists are merged into the final result.
	
	In a functional straightforward way one does not express the preliminary scanning in a single pass, but by using two functions: \selsorted\ constructs the sorted sublist and \selrest\ the other one.
	
	We present below the definitions in \tma\ syntax, but they are practically identical with the ones given in Rocq, with just some syntactic variations related to the two respective language particularities. (In Rocq, additionally, one also has to declare the types of various objects.)
	
	\begin{definition}\label{algorithm:patsort1} {\patsortI.}
		
		\centerline{
			\(
			\begin{array}{c}
				\patsortI[\el]=\el\\
				\fa{a,U}\patsortI[a\cons U] = \merge[\patsortI[\selrest[a\cons U]],\ \selsorted[a\cons U]]
			\end{array}\)
		}
	\end{definition}
	
	\noindent
	\merge\ is the well--known linear algorithm that scans the two sorted lists in parallel and combines them into a sorted list %by choosing the smallest from their heads at each step
	-- see e.g. \cite{Knuth:1998:ACP:280635}.
	
	\noindent
	In the following, \head\ and \last\ have the obvious meaning, while $U\frown a$ denotes append. %ing $a$ to $U$.
	
	\begin{comment}
		\begin{definition}\label{algorithm:selsorted} {\selsorted}
			
			\centerline{
				\(
				\begin{array}{c}
					\selsorted[\el]=\el\\
					\fa{a}\ \selsorted[a\cons \el] = a\cons \el\\
					\cfa{a,U}{a\leq\head[\selsorted[U]]}\ \selsorted[a\cons U] = a\cons \selsorted[U]\\
					\cfa{a,U}{last[\selsorted[U]]\leq a}\ \selsorted[a\cons U] = \selsorted[U]\frown a\\
					\cfa{a,U}{\head[\selsorted[U]]<a<last[\selsorted[U]]} \selsorted[a\cons U] = \selsorted[U]
				\end{array}\)
			}
		\end{definition}
	\end{comment}
	
	\begin{definition}\label{algorithm:selsortedrest} {\selsorted, \selrest.}
		
		\centerline{
			\(
			\begin{array}{c}\\
				\selsorted[\el]=\el\ \wedge\ \selrest[\el]=\el\\ 
				\fa{a}\ \selsorted[a\cons \el] = a\cons \el\ \wedge\ \selrest[a\cons \el] = \el\\
				\cfa{a,U}{a\leq\head[\selsorted[U]]}\hspace{-30pt} \selsorted[a\cons U] = a\cons \selsorted[U]\ \wedge\ \selrest[a\cons U] = \selrest[U]\\
				\cfa{a,U}{last[\selsorted[U]]\leq a}\hspace{-30pt}\selsorted[a\cons U] = \selsorted[U]\frown a\ \wedge\ \selrest[a\cons U] = \selrest[U]\\
				\cfa{a,U}{\head[\selsorted[U]]<a<last[\selsorted[U]]}\hspace{-65pt}\selsorted[a\cons U] = \selsorted[U]\ \wedge\ \selrest[a\cons U] = a\cons\selrest[U]
			\end{array}\)
		}
	\end{definition}
	
	\begin{comment}
		\begin{definition}\label{algorithm:selrest} {\selrest.}
			
			\centerline{
				\(
				\begin{array}{c}
					\selrest[\el]=\el\\
					\fa{a}\ \selrest[a\cons \el] = \el\\
					\cfa{a,U}{a\leq\head[\selrest[U]]}\ \selrest[a\cons U] = \selrest[U]\\
					\cfa{a,U}{last[\selrest[U]]\leq a}\ \selrest[a\cons U] = \selrest[U]\\
					\cfa{a,U}{\head[\selsorted[U]]<a<last[\selsorted[U]]} \selrest[a\cons U] = a\cons\selrest[U]
				\end{array}\)
			}
		\end{definition}
	\end{comment}
	
	\noindent
	Note that, because of the second clause, $\selrest[U]$ is always shorter than $U$,
	thus the function \patsortI\ is terminating.

    A previous version of the certification of this algorithm in \tma\ and Rocq (Coq) have been published in the companion repository of \cite{tokyo}, however we include in the current companion repository an updated version of the proofs.
    
	The multiple recursive calls in the two auxiliary functions make this algorithm very inefficient, however it has a very intuitive definition that expresses naturally the main ideas of the algorithm.
	In the sequel we will illustrate the principles of transformation of this algorithm into an efficient one, namely by changing all the functions that use nested recurrences into ones that use tail-recursion, having in mind the final goal (of future work) that these will provide the basis for efficient array-based programs implemented in imperative languages similar to C.
	
	% All versions of the sorting algorithm use the auxiliary function \merge, which combines two sorted lists into a sorted one.
	
	\begin{comment}
		\begin{definition}\label{algorithm:merge} {\merge.} \todo{Poate renuntam la asta?}
		
		\centerline{
			\(
			\begin{array}{c}
				\merge[\el,\el]=\el\\
				\fa{a,U}\ \merge[a\cons U,\el] = a\cons U\\
				\fa{a,V}\ \merge[\el,b\cons V] = b\cons V\\
				\cfa{a,b,U,V}{a\leq b}\ \merge[a\cons U,b\cons V] = a \cons \merge[U,b\cons V]\\
				\cfa{a,b,U,V}{\neg(a\leq b)}\ \merge[a\cons U,b\cons V] = b \cons \merge[a\cons U, V]\\
			\end{array}\)
		}
	\end{definition}
	\end{comment}

\subsection{Using a Tail-recursive Split}
	
%	The transformation from nested recursion is based on the following main principle:
A recursive definition of a function $F$ has in general 4 types of clauses: trivial clauses, recursion clauses, initialization clauses, and termination clauses. For tail-recursive definitions they have the special structure described below.
\begin{itemize}
	\item The {\em trivial} clauses have the form:\\
    \centerline{$ F[\delta]\ :=\ \epsilon$, where $\delta$ is the input and $\epsilon$ is the output of the function.}
    These clauses are not recursive and they may be missing.
    % In case of transformation of a function that is not tail-recursive, these clauses are imported from the original definition of these one.
    
	\item The {\em recursion} clauses have the form:\\
\centerline{$ F[\alpha_1,\ldots,\alpha_m,\beta_1,\ldots,\beta_n,%
	\gamma_1,\ldots,\gamma_p]\ :=\ 
	F[\alpha'_1,\ldots,\alpha'_m,\beta'_1,\ldots,\beta'_n,%
	\gamma'_1,\ldots,\gamma'_p],$ where:}
\begin{itemize}
	\item the $\alpha$'s denote inputs that are consumed,
	\item the $\beta$'s denote values that are computed in the previous steps and are necessary later, and
	\item the $\gamma$'s denote results that accumulate.
\end{itemize}
On the LHS we may have variables representing domain elements or lists, or pattern expressions (using $\cons$, $\append$, or $\langle\ldots\rangle$) that denote lists or tuples.
On the RHS\footnote{Right-Hand-Side.} we may have any expressions, but the type of expressions on LHS/RHS are in one-to-one correspondence.

	\item The {\em initialization} clauses have the form:
	
\centerline{$ F[\delta]\ :=\ 
	F[\alpha'_1,\ldots,\alpha'_m,\beta'_1,\ldots,\beta'_n,%
	\gamma'_1,\ldots,\gamma'_p],$ where $\delta$ is the input of the function,}
 while the arguments of the RHS have the role presented above.
 These are missing when the function is called from another one, because this call sets up the necessary configuration of the $\alpha/\beta/\gamma$--arguments. 
 	\item The {\em termination} clauses have the form:
 	
 	\centerline{$ F[\alpha_1,\ldots,\alpha_m,\beta_1,\ldots,\beta_n,%
 		\gamma_1,\ldots,\gamma_p]\ :=\ \epsilon,$ where $\epsilon$ is the output of the function,}
 	while the arguments of the LHS have the role presented above.
\end{itemize}

	The new split function follows the scheme above:

	\begin{definition}\label{algorithm:splitsorted} {\patsplit.}
		
		\centerline{
			\(
			\begin{array}{c}
				\fa{a,b,Y,Z}\ \patsplit[a,b,\el,Y,Z] = \langle (a\cons Y)\frown b,Z\rangle\\
				\cfa{a,b,c,X,Y,Z}{c\leq a}\ \patsplit[a,b,c\cons X,Y,Z] = \patsplit[c,b,X,a\cons Y,Z]\\
% 				\cfa{a,b,c,X,Y,Z}{b\leq c\ \wedge\ c > a}\ \patsplit[a,b,c\cons X,Y,Z] = \patsplit[a,c,X,Y\frown b,Z]\\
% 				\cfa{a,b,c,X,Y,Z}{b > c\ \wedge\ c > a}\ \patsplit[a,b,c\cons X,Y,Z] = \patsplit[a,b,X,Y,c\cons Z]
				\cfa{a,b,c,X,Y,Z}{a < c < b}\ \patsplit[a,b,c\cons X,Y,Z] = \patsplit[a,b,X,Y,c\cons Z]\\
    			\cfa{a,b,c,X,Y,Z}{b\leq c}\ \patsplit[a,b,c\cons X,Y,Z] = \patsplit[a,c,X,Y\frown b,Z]
			\end{array}\)
		}
	\end{definition}

    The clauses 2, 3, and 4 are {\em recursion} clauses with arguments $a,b,X,Y,Z$. $a$ and $b$ are the arguments that are computed at previous steps and are necessary later: they contain the minimal, respectively the maximal element of the accumulating result (a sorted sublist of the input), thus they are also part of the sorted sublist from the result.
    $X$ is the rest of the input, which is consumed during the computation by picking-up the head of it at each step (this also insures termination).
    $Y$ and $Z$ are the components of the result: the sorted sublist and the rest of the input, respectively.

    There are no {\em trivial} clauses.

    There are no {\em initialization} clauses, because the set-up of the arguments of the recursion clauses is performed in the clauses 3 and 4 of the the algorithm \patsortII\ (see below).
    Namely these clauses pick-up the first two elements of the input and set them to be the minimal, respectively the maximal elements of the intented result. The rest of the input is set-up as the input argument to be consumed. The components of the result are set to the empty list.

    Clause 1 is the {\em termination} clause: when the input is completely consumed, it returns the tuple containing the sorted sublist (with the extreme elements added) and the rest of the input.
    The use of tuple here is not mandatory\footnote{That is the case in general.}, but it eases the expression of the algorithm as a logical formula, as well as the rendering of the certification proof.
    
	%The core of the algorithm is \patsplit\ that produces a pair $\langle U,V \rangle$ of a sorted sublist and the rest from the input $X$, which is scanned during the execution.
	%The arguments $a,b,Y$ represent the sorted sublist $(a\cons Y)\frown b)$, while $Z$ accumulates the remaining elements.
		
	The main function still remains recursive:
    %, however this does not have a crucial effect on the performance since the depth of the recursion is on average logarithmic with respect to the length of the input.
	
	\bigskip
	\begin{definition}\label{algorithm:patsortII} {\patsortII.}
		
		\centerline{
			\(
			\begin{array}{c}
				\patsortII[\el]=\el\\
				\fa{a}\ \patsortII[a\cons \el] = a\cons \el\\
				\cfa{a,b,X}{a\leq b}\ \patsortII[a\cons (b\cons X)] = \patcomb[\patsplit[a,b,X,\el,\el]]\\
				\cfa{a,b,X}{\neg(a\leq b)}\ \patsortII[a\cons (b\cons X)] = \patcomb[\patsplit[b,a,X,\el,\el]]\\
				\fa{U,V}\patcomb[\langle U,V\rangle]=\merge[U,\patsortII[V]]
			\end{array}\)
		}
	\end{definition}

    \patcomb\ is a simple intermediate function whose role is to distribute the components of the result of \patsplit\ for further processing.
    
	Note that for the termination of the algorithm, $V$ must be strictly smaller than $X$: this is explicitly proven during the certification process.

    \subsection{Tail-recursive Sorting}
    In order to obtain a sorting algorithm that is completely tail-recursive
	we transform the main function using the same general schema as above, namely we express sorting through the tail-recursive function \patjoin:

	\begin{definition}\label{algorithm:patsortIII} {\patsortIII.}
		
		\centerline{
			\(
			\begin{array}{c}
				\patsortIII[X]=\patjoin[\el, X]	
			\end{array}\)
		}
	\end{definition}
	
	\begin{definition}\label{algorithm:patjoin} {\patjoin.}
		
		\centerline{
			\(
			\begin{array}{c}
				\patjoin[X, \el]=X\\
				\fa{a,X}\ \patjoin[X,a\cons \el] = Merge[X, a\cons \el]\\
				\cfa{a,b,X,Y}{a\leq b}\ \patjoin[X,a\cons (b\cons Y)] = \patmix[X,\patsplit[a,b,Y,\el,\el]]\\
				\cfa{a,b,X,Y}{a > b}\ \patjoin[X,a\cons (b\cons Y] = \patmix[X,\patsplit[b,a,Y,\el,\el]]\\
				\fa{X,U,V}\patmix[X,\langle U,V\rangle]=\patjoin[\merge[X,U],V]]
			\end{array}\)
		}
	\end{definition}

    Clauses 3, 4, and 5 encode the {\em recursion}, which has two arguments: the first represents the accumulating sorted result, while the second constitutes the consumed input. The recursion uses the intermediate function \patmix\ whose role is the same as of \patcomb\ of the previous version: it distributes the components of the result of \patsplit\ for further processing. Note that, in contrast to the \patsplit\ recursion, here we do not have a \head/\Tail\ consumation of the input, but a \patsplit\ consumation: at each recursion step the current part of the input is reduced to $V$, the unsorted rest of it after selecting the sorted sublist.
    This means that the termination of the algorithm needs the certification of the fact that $V$ is strictly included in the previous version of the consumed input $a\cons(b\cons Y),$ which is straightforward since $a$ and $b$ are already included in the sorted sublist from the beginning.
    There are no {\em trivial} clauses.
    The {\em initialization} is performed by the call from \patsortIII.
    Clauses 1 and 2 are the {\em termination} ones. 
    
	%\merge\ is still not tail-recursive: a tail-recursive version of it is straightforward, however it would be pretty inefficient in such a functional environment, because it makes heavy use of {\em Append} and {\em Concat}.

    If we consider $\append$ and $\concat$ to be one step functions, then \merge\ can also be expressed in a tail-recursive way, by adding the result as an additional accumulating argument to the function (see the \tma\ implementation in the companion files).
	
	\section{Certification in Rocq and Theorema}
	
	\subsection{The Certification Method}
	
	Since in both  \tma\ and  Rocq the algorithms are expressed as logical formulae, there is no need to generate verification conditions.
	Rather, the function definitions are included in the set of statements that are used for proving the correctness, together with the necessary definitions and properties of the relevant functions and predicates.
	
	These properties constitute a \textit{background theory} of lists and multisets that contains the definitions and properties of all the necessary notions, including ordering and sorting.
	In \tma\ they are explicitly listed in the file {\tt Theory.pdf} present in the companion repository. %\footnote{\url{https://www.risc.jku.at/people/tjebelea/Dramnesc-Jebelean-Stratulat-FROM-2026.html}}.
	In Rocq they are partially created for this paper, and partially included from Rocq libraries.
		
	The correctness is proven with respect to a certain specification consisting of an input condition and an output condition.
	The correctness conjecture states that for any input $X$ satisfying  a certain input condition $P$, the corresponding result of the function $F$ defined by the algorithm satisfies a certain output condition $Q$ with respect to the input.
    
    The final step of the certification consists in proving the two main correctness conjecture, namely that the result of the sorting function is sorted and that it contains exactly the same elements as the input list\footnote{When the proofs of these facts are independent of each other, this is expressed as two separate statements.}.
    In \tma\ the later is formalized as equality between the multiset of the elements of the input and the one of the output.

    A significant difference between the \tma\ and the Rocq versions of the background theory as well as between the certification processes derives from the different approaches to the characterizations of the fact that two lists have the same elements: in \tma\ this is done using multisets, while in Rocq this is done using the {\em permutation} relation.

    It is important to note a significant difference between the correctness proofs of tail-recursive algorithms and the ones of the other ones.
    
    In the case of algorithms that are not tail-recursive, the {\em main term} -- the term consisting in the application of the main function -- represents a value that changes at every recursive call, and the certification statement expresses a certain relation between this term and some other values involved in the computation, some of them also changing.
    Proof of this statement is possible using the changes that are described in the definition of the algorithm.

    In the case of tail-recursive algorithms, the main term represents a value that does not change: it always equals the final result of the computation.
    The changes occur in the values of the arguments of the main term, thus the invariant has to express a relation between these at every recursion.
    However, we want to avoid the use of a meta-logic that would allow to express something like: ``the arguments of the function call during the computation have this property''.
    Our novel method consists in formulating an auxiliary lemma that expresses the fact that, if the arguments of the main term have certain invariant properties, then there exists an element that equals it and has the properties required by the correctness conjecture.
    % This auxiliary lemma is also a kind of invariant property and it facilitates the proof of correctness.
	
	\subsection{Certification in \tma}
	
	\subsubsection{Proving in \tma}
	
	A \tma\ proof is a tree of {\em proof situations} (current goal and assumptions), whose root is the initial statement to be proven, and each other node is obtained by the application of a certain {\em inference rule} to its predecessor.
	The leaves of the tree are established by certain inference rules that decide whether the current node is either a {\em success} or a {\em failure}.
	
	In certain situations it appears to be most effective to apply certain inference rules in a certain order: these situations and the appropriate course of action are described by {\em strategies}.

    For the purpose of this case study we implemented in \tma\ a special prover for lists over an ordered arbitrary domain and multisets, by defining the necessary inference rules and strategies.
	
	\smallskip
	\noindent
	{\bf Inference Rules.}
	The general rules are classical inference rules similar to those of sequent calculus, and other straightforward natural rules like applying properties of equality, using definitions, replacement by equal terms, etc.
	Other important general rules are:
	\begin{itemize}
		\item %\textbf{G1:} 
		\textit{Generalized induction.}
		When proving the goal $\varphi[X]$, one may assume  $\varphi[Y]$ for any $Y$ strictly smaller than $X$ with respect to strict inclusion of multisets ($X\subset Y$).
		
		%\item %\textbf{G2:} 
		%\textit{Cascading.} When the proof fails, construct a conjecture from the current proof state and prove it separately.
		%This step requires the intervention of a human because it is not completely automated.
		
		\item %\textbf{G3:} 
		\textit{Term Contracting.} Replace a repeating composite constant term to a new constant.
		
		%\todo{Example}
	\end{itemize}
	\smallskip
	\noindent
	Additionally, one has inference rules that are specific to the domains of lists and multisets:
	\begin{itemize}
		\item %\textbf{G4:}
		\textit{Equality reduction.} Delete identical terms that occur on both sides of a goal consisting in an equality between two multiset unions.
        
        \item %\textbf{S1:} 
		\textit{List expansion.} When it is assumed that a constant list is nonempty, it is replaced by a term of the form $a\cons Y$, where $a$ and $Y$ are new Skolem constants, and similarly for lists containing at least two elements.
		
		%\item %\textbf{S2:} this is almost micro-terms
		%\textit{Multiset expansion.} The term $\ms[T]$ where $T$ is a composite term representing a list can be expanded into several $\ms$ terms by using multiset union. This is used mostly in conjunction with the equality reduction.
		
		\item %\textbf{S3:}
		\textit{Inequality reduction.} When the goal is an inclusion or and inequality (with respect to the order used for sorting) between terms representing lists or multisets, a subterm $T$ on the LHS may be replaced by a term that is known to be included in $T$, and a subterm on the RHS may be replaced by a term that it is known to include it.
	\end{itemize}
	\smallskip
	\noindent
	The use of inference rules is sometimes triggered by {\em strategies}, from which the most important are:
	\begin{itemize}
		\item {\em Lemma for invariant.} Generalize the correctness statement of a tail-recursive algorithm into a lemma that can be proven to hold at each recursive call -- see lemmas \ref{lemma:PatSplit-correct} and \ref{lemma:PatSplit-Invariant}.
        
		\item {\em Micro-atoms.} Transform an atom\footnote{An atom is a subformula consisting of a predicate symbol applied to some terms.} into a conjunction of several ``micro-atoms'', that is terms whose arguments are not composite.\\
		% Typically used for ordering predicates and for the sorting predicate \issorted.
		{Example:} $\issorted[a\cons U]$ decomposes into $(a\leq U)\wedge \issorted[U]$.
        \item {\em Micro-terms.} Transform a term into several ``micro-terms'', that is terms whose arguments are not composite.\\
        %Typically used for multiset union.% (see also rule \textbf{S2} above).
		{Example:} $\ms[a\cons U]$ is transformed into $\mse{a}\uplus\ms[U]$
		\item {\em Proof by algorithm cases.} For proving a statement about a function, create a branch for each clause of the respective algorithm.
	\end{itemize}
	
	\subsubsection{Certification Statements and Proofs}

    The certification process consists in defining and proving several statements, most of them being lemmas that are used in the proof of the main correctness conjectures.

	{\bf PatSort1} has two certification statements:
	%for each algorithm, namely the preservation of the multiset and the fact that the output is sorted. For instance:
	\begin{theorem}{PatSort1-Preserves.}\hspace{30pt}$\fa{X}{\ms[PatSort1[X]]=\ms[X]}$
	\end{theorem}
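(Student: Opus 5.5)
We prove the statement by generalized induction on $X$ (strict inclusion of multisets), following the clauses of \patsortI. The case $X=\el$ is immediate from the first clause, since $\ms[\el]=\emptyset$. For $X=a\cons U$, the second clause turns the goal into
\[\ms[\merge[\patsortI[\selrest[a\cons U]],\selsorted[a\cons U]]]=\ms[a\cons U].\]
We use the background property of \merge\ that $\ms[\merge[V,W]]=\ms[V]\uplus\ms[W]$, proven separately by induction on the clauses of \merge. Together with the micro-terms strategy, this reduces the goal to
\[\ms[\patsortI[\selrest[a\cons U]]]\uplus\ms[\selsorted[a\cons U]]=\mse{a}\uplus\ms[U].\]

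The central auxiliary lemma is a multiset partition property of the two selection functions:
\[\fa{X}\ \ms[\selsorted[X]]\uplus\ms[\selrest[X]]=\ms[X].\]
We prove it by proof by algorithm cases on Definition~\ref{algorithm:selsortedrest}, with induction on the tail. The empty and singleton clauses are direct. In the clauses that prepend or append $a$ to $\selsorted[U]$, the micro-terms strategy expands the sides into $\mse{a}\uplus\ms[\selsorted[U]]\uplus\ms[\selrest[U]]$, and equality reduction together with the induction hypothesis for $U$ closes the goal. The third clause is the same, except that $a$ goes into the rest. For append we need the theory property $\ms[V\frown a]=\ms[V]\uplus\mse{a}$.

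We also need $\ms[\selrest[a\cons U]]\subset\ms[a\cons U]$ strictly, so that the induction hypothesis applies to $\selrest[a\cons U]$. This follows from the partition lemma once we know $\selsorted[a\cons U]$ is nonempty. Nonemptiness holds by algorithm cases: the singleton clause gives $a\cons\el$, the prepend and append clauses add an element, and in the last clause $\selsorted[U]$ is nonempty because its \head\ and \last\ are compared. This requires a small lemma, $U\neq\el\Rightarrow\selsorted[U]\neq\el$, proven by induction. With these in hand, the induction hypothesis rewrites $\ms[\patsortI[\selrest[a\cons U]]]$ to $\ms[\selrest[a\cons U]]$, and the partition lemma finishes the proof.

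I expect the main obstacle to be the conditional definition of \selsorted/\selrest. Its guards refer to $\head$ and $\last$ of $\selsorted[U]$, which are only meaningful when $U$ is nonempty, so the case analysis must be organized around $U=\el$ versus $U\neq\el$ to cover all inputs exactly once. The nonemptiness lemma has to be established first, and the strict-inclusion argument depends on it.
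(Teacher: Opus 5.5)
Your proposal is correct and takes essentially the same route as the paper. The paper also uses generalized induction over strict multiset inclusion, following the clauses of \patsortI. It relies on Merge-Preserves, on a separately proven lemma that $\ms[\selrest[a\cons U]]$ is strictly included in $\ms[a\cons U]$ (which makes the induction sound), and on the fact that \selsorted\ and \selrest\ together preserve the multiset of the input. Deriving the strict inclusion from your partition lemma plus non-emptiness of \selsorted, rather than proving it directly by algorithm cases, is only a minor variation.
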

    
	\begin{theorem}{PatSort1-Sorted.}\hspace{55pt}$\fa{X}{\issorted[PatSort1[X]]}$	
	\end{theorem}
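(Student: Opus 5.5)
We prove \textit{PatSort1-Sorted} by generalized induction on $X$ (with respect to strict inclusion of multisets), splitting by the clauses of Definition~\ref{algorithm:patsort1}. For $X=\el$ the result $\el$ is trivially sorted. For $X=a\cons U$ the result is $\merge[\patsortI[\selrest[a\cons U]],\ \selsorted[a\cons U]]$. This is sorted once three facts hold. First, the induction hypothesis gives $\issorted[\patsortI[\selrest[a\cons U]]]$, which needs $\ms[\selrest[a\cons U]]\subset\ms[a\cons U]$ strictly. Second, $\issorted[\selsorted[a\cons U]]$. Third, the background property of \merge: if $\issorted[U]$ and $\issorted[V]$, then $\issorted[\merge[U,V]]$.

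So the work goes into two auxiliary lemmas about Definition~\ref{algorithm:selsortedrest}, each proven by induction on the list by algorithm cases.

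\emph{Lemma A (splitting preserves elements):} $\fa{U}\ \ms[\selsorted[U]]\uplus\ms[\selrest[U]]=\ms[U]$. The five clauses are handled by micro-terms and equality reduction. For example, in clause 3, $\ms[a\cons\selsorted[U]]\uplus\ms[\selrest[U]]=\mse{a}\uplus\ms[\selsorted[U]]\uplus\ms[\selrest[U]]$, which equals $\mse{a}\uplus\ms[U]$ by the induction hypothesis. From Lemma A and $\selsorted[a\cons U]\neq\el$ (immediate by cases: clauses 2--4 produce nonempty lists, and clause 5 requires $\selsorted[U]$ to have a head), we obtain the strict inclusion $\ms[\selrest[a\cons U]]\subset\ms[a\cons U]$. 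This inclusion is needed both for the induction step of the sortedness theorem and for termination.

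\emph{Lemma B:} $\fa{U}\ \issorted[\selsorted[U]]$, by induction on $U$ with cases following the clauses:
\begin{itemize}
\item Clause 3: the guard $a\leq\head[\selsorted[U]]$ together with sortedness of $\selsorted[U]$ gives $a\leq\selsorted[U]$, and micro-atoms gives $\issorted[a\cons\selsorted[U]]$.
\item Clause 4: symmetric, using $\last$ and the property that $\issorted[V\frown a]$ holds when $\issorted[V]$ and $V\leq a$.
\item Clause 5: the list is unchanged, so the induction hypothesis applies directly.
\end{itemize}

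The main obstacle I expect is bridging $a\leq\head[S]$ and $a\leq S$, i.e.\ the element-to-list ordering, for a sorted nonempty $S$. The same issue arises for $\last$. The guards also implicitly presuppose that $\selsorted[U]$ is nonempty when $U$ is nonempty, because otherwise $\head$ is undefined. I would therefore first add background lemmas: $\selsorted[U]\neq\el$ for nonempty $U$; $\issorted[S]\wedge a\leq\head[S]\Rightarrow a\leq S$; and the dual statement for \last. These are proven by routine induction using the micro-atom decomposition of \issorted.

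With these lemmas in place, the main induction step is a direct combination of Lemma B, the \merge\ property, and the induction hypothesis.
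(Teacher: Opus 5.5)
Your proposal is correct and follows essentially the same route as the paper: generalized induction on $X$ with respect to strict multiset inclusion, and a case split along the clauses of \patsortI. The induction hypothesis is applied to $\selrest[a\cons U]$, justified by a separately proven strict-inclusion lemma that also gives termination, and the argument closes via Merge-Sorted together with the sortedness of $\selsorted[a\cons U]$, all as in the paper; deriving the strict inclusion from your Lemma A plus non-emptiness of $\selsorted[a\cons U]$ fits the paper's remark that the sortedness proof for \patsortI\ relies on multiset preservation.
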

	\noindent
    Both for \patsortI\ and for \merge\ (see below), the proof of the sorting property needs the multiset preservation statement. \merge\ has a different structure, thus its certification statements are:
	\begin{lemma}{Merge-Preserves.}\hspace{30pt}$\fa{X,Y}{\ms[\merge[X,Y]]=\ms[X]\uplus\ms[Y]}$\end{lemma}
    
	\begin{lemma}{Merge-Sorted.}\hspace{10pt}$\fa{X,Y}{(\issorted[X]\wedge\issorted[Y])\implies\issorted[\merge[X,Y]]}$\end{lemma}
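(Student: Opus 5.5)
The plan is an induction on the pair $X,Y$, using the generalized induction rule: the total number of elements in $\ms[X]\uplus\ms[Y]$ strictly decreases at each recursive call of \merge. For concreteness I take \merge\ to be defined by the usual five clauses. The two base clauses cover the cases where one argument (or both) is empty and return the other argument. The two recursive clauses are
$\merge[a\cons U,b\cons V]=a\cons\merge[U,b\cons V]$ when $a\leq b$, and
$\merge[a\cons U,b\cons V]=b\cons\merge[a\cons U,V]$ otherwise.
Following the strategy \emph{proof by algorithm cases}, I create one branch per clause. In each branch I assume $\issorted[X]$ and $\issorted[Y]$, and I use the induction hypothesis for the smaller recursive call.

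The base cases are immediate, since the result is literally one of the sorted inputs (or $\el$, which is sorted). For the case $a\leq b$, I first apply the micro-atoms strategy to the goal $\issorted[a\cons\merge[U,b\cons V]]$, reducing it to two subgoals. The first is $a\leq\merge[U,b\cons V]$ and the second is $\issorted[\merge[U,b\cons V]]$. The second subgoal follows from the induction hypothesis once $\issorted[U]$ is obtained by decomposing the assumption $\issorted[a\cons U]$ into $a\leq U\wedge\issorted[U]$.

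The first subgoal is where Lemma Merge-Preserves is needed, and it is the main obstacle. Since the ordering between an element and a list is defined elementwise, I would rewrite $a\leq\merge[U,b\cons V]$ as $a\leq\ms[\merge[U,b\cons V]]$. By Merge-Preserves this becomes $a\leq\ms[U]\uplus\mse{b}\uplus\ms[V]$, which splits into micro-atoms. The atom $a\leq U$ comes from the sortedness of $a\cons U$, and $a\leq b$ is the case condition. The atom $a\leq V$ follows by transitivity from $a\leq b$ and $b\leq V$, where $b\leq V$ comes from $\issorted[b\cons V]$; this transitivity step is an instance of the inequality reduction rule.

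The symmetric case $\neg(a\leq b)$ is handled identically. By totality of the order, the case condition gives $b<a$, so $b\leq a$. Together with $b\leq V$ and $a\leq U$, this yields $b\leq\ms[a\cons U]\uplus\ms[V]$, and the induction hypothesis supplies the sortedness of the tail. The only delicate points are making the elementwise extension of $\leq$ to multisets interact correctly with $\uplus$, and using totality in the negative branch. I expect both to be covered by properties already present in the background theory.
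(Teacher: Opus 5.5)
Your proposal is correct and follows essentially the same route as the paper. Both proofs split into cases over the clauses of $\merge$ and take the induction hypothesis at the recursive call, justified by the strict decrease of the arguments' multisets. Both then apply the micro-atom decomposition of $\issorted$ and invoke Merge-Preserves to discharge the ordering subgoal $a\leq\merge[U,b\cons V]$ (resp.\ $b\leq\merge[a\cons U,V]$).
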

    
	The proofs of these two statements are by cases, one for each clause of the definition of the function.
	For the cases where the definition involves recursion, the proof proceeds by induction: the induction hypothesis is generated according to the recursive term in the definition.
	For instance, in the case corresponding to the last clause of \merge\ in the proof of multiset preservation one has to prove for arbitrary but fixed $a_0,b_0,U_0,V_0$: $\issorted[\merge[a_0\cons U_0,b_0\cons V_0]]$.
	The generated induction hypothesis is: $\issorted[\merge[a_0\cons U_0, V_0]]$.
	The induction scheme is sound because the union of multisets of the arguments of the function in the induction hypothesis is strictly included in the union of multisets of the arguments of the function in the induction conclusion.
	This follows from a rule used by the prover: the multiset of symbols\footnote{This is the multiset of symbols at meta level, it does not denote an term at object level.} $\mse{a_0,U_0,V_0}$ is strictly included in the multiset of symbols $\mse{a_0,b_0,U_0,V_0}$.
	This kind of induction is used in all the proofs regarding the functions that are defined by structural induction.
	
	In contrast, the definition of \patsortI\ on $a\cons U$ uses a recursive call to $\selrest[a\cons U]$, thus the
	correctness statement also requires that $\selrest[X]$ is always strictly included in $X$, which insures that the algorithm terminates, and simultaneously that the induction scheme used in the proofs of its correctness statements is sound.
	This strict inclusion is proven as separate lemma, as it is also done for other functions that are not defined by structural induction, see below.
	
	\smallskip
	\noindent
	{\bf PatSort2} has only one certification statement, this is due to the fact that two separate proofs of the two assertions would have many identical steps.
	\begin{theorem}{PatSort2-correct.}\ $\fa{X}{\ (\issorted[\patsortII[X]]\ \wedge\ \ms[\patsortII[X]]=\ms[X]} )$\end{theorem}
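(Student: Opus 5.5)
The proof goes by generalized induction on the multiset $\ms[X]$ with respect to strict inclusion, with one branch for each clause of \patsortII. Before the main argument I would isolate the correctness of \patsplit\ as an invariant lemma, following the \emph{Lemma for invariant} strategy and using the existential formulation announced above.

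\emph{Invariant lemma for \patsplit.} For all $a,b,X,Y,Z$, assume $a\leq b$, $\issorted[(a\cons Y)\frown b]$ and $a\leq Y\leq b$. Then there exist $U,V$ such that:
\begin{itemize}
\item $\patsplit[a,b,X,Y,Z]=\langle U,V\rangle$;
\item $\issorted[U]$;
\item $\ms[U]\uplus\ms[V]=\mse{a,b}\uplus\ms[X]\uplus\ms[Y]\uplus\ms[Z]$;
\item $\ms[V]\subseteq\ms[X]\uplus\ms[Z]$.
\end{itemize}

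This is proven by induction on $X$, which is structural because $X$ is consumed head by head, with proof by algorithm cases on the four clauses of \patsplit.
\begin{itemize}
\item In the termination clause we take $U=(a\cons Y)\frown b$ and $V=Z$, and every property is immediate.
\item In the clause $c\leq a$ the new sorted accumulator is $(c\cons(a\cons Y))\frown b$. Sortedness follows by micro-atoms from $c\leq a\leq Y\leq b$. The multiset equation follows by micro-terms and equality reduction.
\item The clause $b\leq c$ is symmetric, with $Y\frown b$ and new maximum $c$.
\item In the middle clause only $Z$ grows, to $c\cons Z$, and the inclusion $\ms[V]\subseteq\ms[X]\uplus\ms[Z]$ is preserved because $c$ moved from $X$ to $Z$.
\end{itemize}
Instantiating the lemma with $Y=Z=\el$ gives the corollary we actually use. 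If $a\leq b$, then $\patsplit[a,b,X,\el,\el]=\langle U,V\rangle$ with $\issorted[U]$, $\ms[U]\uplus\ms[V]=\mse{a,b}\uplus\ms[X]$, and $\ms[V]\subseteq\ms[X]\subset\ms[a\cons(b\cons X)]$.

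\emph{Main theorem.} Clauses 1 and 2 of \patsortII\ are trivial, using that $\el$ and one-element lists are sorted. For clause 3 ($a\leq b$), term contracting and the corollary give $\patsortII[a\cons(b\cons X)]=\merge[U,\patsortII[V]]$. Since $\ms[V]$ is strictly included in $\ms[a\cons(b\cons X)]$, the induction hypothesis applies to $V$ and yields $\issorted[\patsortII[V]]$ and $\ms[\patsortII[V]]=\ms[V]$. Then Merge-Sorted gives sortedness of the result. Merge-Preserves gives $\ms[U]\uplus\ms[V]=\mse{a,b}\uplus\ms[X]=\ms[a\cons(b\cons X)]$. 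Clause 4 is identical with $a,b$ swapped, using $\neg(a\leq b)\Rightarrow b\leq a$ from totality.

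\emph{Expected obstacle.} I expect the main difficulty to be the invariant lemma, specifically choosing invariants strong enough to be inductive. Sortedness of $(a\cons Y)\frown b$ alone does not suffice: the step cases need the explicit bounds $a\leq Y\leq b$ in micro-atom form, so that the inequality-reduction rule can prove $c\leq a\cons Y$ and $Y\frown b\leq c$. The strict-inclusion fact needed for the soundness of the induction, and hence for termination, also has to be carried through this same lemma.
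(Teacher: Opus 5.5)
Your proof is correct and follows the paper's route: generalized induction on strict multiset inclusion by algorithm cases, an existential invariant lemma for \patsplit\ proven by head--tail induction on $X$ and instantiated at $Y=Z=\el$, and then Merge-Sorted and Merge-Preserves. There are two small deviations. First, the paper obtains $\ms[V]\subset\ms[a\cons(b\cons X)]$ from a conjunct $U\neq\el$ together with the multiset equation, whereas you carry $\ms[V]\subseteq\ms[X]\uplus\ms[Z]$ through the invariant; both work. Second, your extra hypotheses $a\leq b$ and $a\leq Y\leq b$ are redundant, because they follow from $\issorted[(a\cons Y)\frown b]$, which is the only assumption the paper's invariant needs (so your remark that sortedness alone does not suffice is mistaken).
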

	\noindent 
	The induction scheme of the proof uses the fact that the multiset of a certain term (in this case $V_0$) that is produced during the proof is strictly included in the multiset of the term of the induction conclusion (in this case $a_0\cons (b_0\cons X_0)$). Therefore the induction hypothesis is produced only after $V_0$ is produced and the fact that its multiset is smaller is proven.
	The same strategy is used in other proofs also, see below.
	
	The proof uses the correctness of \patsplit, which includes both the multiset preservation and the sorting properties, and additionally the non-emptiness of the first component of the result, which is necessary for the soundness of the induction mentioned above.
	Moreover, in order to avoid an index-based access\footnote{This would make the formulation of the lemma much more complex and less intuitive.} to the elements of the pair that is produced by the function, we express the correctness using an existential statement.
	\begin{lemma}{PatSplit-correct.}\label{lemma:PatSplit-correct}
		
		\centerline{
			\(
			\underset{a\leq b}{\underset{a,b,X}{\forall}}\ 
			\underset{U,V}{\exists}
			%\left{
				\begin{array}{l}
					\patsplit[a,b,X,\el,\el]=\langle U,V\rangle\ \wedge
					\issorted[U]\ \wedge\ U\not{=}\el\ \wedge\\
					\ms[a\cons(b\cons X)]=\ms[U]\uplus\ms[V]
				\end{array}
				%\right.
				\)
			}
		\end{lemma}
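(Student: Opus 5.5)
The statement itself is not directly amenable to induction: the accumulators of \patsplit\ are fixed to $\el$ and $a\leq b$ is the only invariant mentioned. Following the strategy \emph{Lemma for invariant}, I will first prove a generalized lemma (the forthcoming Lemma~\ref{lemma:PatSplit-Invariant}) about arbitrary intermediate calls. Its hypotheses are $a\leq b$, $\issorted[Y]$, $a\leq Y$ and $Y\leq b$ (in the extended ordering between elements and lists). It then asserts
\[
\underset{U,V}{\exists}\ \patsplit[a,b,X,Y,Z]=\langle U,V\rangle\ \wedge\ \issorted[U]\ \wedge\ U\neq\el\ \wedge\ \ms[a\cons((b\cons X)\append Y)]\uplus\ms[Z]=\ms[U]\uplus\ms[V],
\]
where I write the multiset side as $\mse{a,b}\uplus\ms[X]\uplus\ms[Y]\uplus\ms[Z]$ to avoid concatenation. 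Lemma~\ref{lemma:PatSplit-correct} then follows by instantiating $Y=Z=\el$. The hypotheses on $Y$ hold vacuously, and the micro-terms strategy rewrites $\ms[a\cons(b\cons X)]$ as $\mse{a}\uplus\mse{b}\uplus\ms[X]$, which matches the invariant after deleting $\ms[\el]=\emptyset$.

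The invariant lemma is proved by induction on $X$, with proof by algorithm cases on the four clauses of \patsplit. In the termination clause, $X=\el$, we take $U=(a\cons Y)\append b$ and $V=Z$. Then $U\neq\el$ is immediate. Multiset equality follows by micro-terms and equality reduction. Sortedness of $(a\cons Y)\append b$ follows via micro-atoms from $\issorted[Y]$, $a\leq Y$, $Y\leq b$ and $a\leq b$; this needs background-theory properties of \issorted\ for cons and append.

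In the recursive clauses, $X=c\cons X'$, the induction hypothesis is applied to $X'$, which is strictly smaller. I then check that the new arguments satisfy the invariant. For clause 2 ($c\leq a$), the new call is $\patsplit[c,b,X',a\cons Y,Z]$, so I need $c\leq b$ (transitivity), $\issorted[a\cons Y]$ (from $a\leq Y$), $c\leq a\cons Y$ (from $c\leq a\leq Y$) and $a\cons Y\leq b$. For clause 4 ($b\leq c$), the new call is $\patsplit[a,c,X',Y\append b,Z]$ and the argument is symmetric. For clause 3 ($a<c<b$), $Y$ is unchanged and $c$ moves into $Z$. In each case the witnesses $U,V$ delivered by the hypothesis are reused, and the multiset equation is rearranged by commutativity and associativity of $\uplus$ (e.g. $\mse{c}\uplus\mse{b}\uplus\ms[X']\uplus\ms[a\cons Y]$ versus $\mse{a}\uplus\mse{b}\uplus\ms[c\cons X']\uplus\ms[Y]$).

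The main obstacle is getting the invariant right. It must record the ordering relations $a\leq Y\leq b$, and not merely sortedness of $Y$, otherwise the sortedness of the final $(a\cons Y)\append b$ is unreachable. The auxiliary lemmas about \issorted\ and ordering on appended lists must also be available in the background theory. A secondary point is that in clauses 2 and 4 the extreme elements change. Here inequality reduction is needed to transfer $Y\leq b$ to $a\cons Y\leq b$ and $a\leq Y$ to $a\leq Y\append b$, and I expect this to be the most delicate step for the prover.
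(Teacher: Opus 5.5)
Your proposal is correct and follows the paper's route: you generalize to the invariant Lemma~\ref{lemma:PatSplit-Invariant}, prove it by head--tail induction on $X$ with one case per clause of \patsplit\ (with $a,b,Y,Z$ left universally quantified in the hypothesis), and then instantiate $Y=Z=\el$. Your precondition $a\leq b\wedge\issorted[Y]\wedge a\leq Y\wedge Y\leq b$ is just the micro-atom decomposition of the paper's $\issorted[(a\cons Y)\append b]$; note also that in the paper $\append$ appends a single element, so your $(b\cons X)\append Y$ should be $\concat$, which is what your rewriting to $\mse{a,b}\uplus\ms[X]\uplus\ms[Y]\uplus\ms[Z]$ already expresses.
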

		
		As it is typical for tail-recursive functions, the correctness of \patsplit\ requires another lemma (the invariant), which is possible to prove by induction on the recursive step of the algorithm.
		
		\begin{lemma}{PatSplit-Invariant.}\label{lemma:PatSplit-Invariant}
			
			\centerline{
				\(
				\underset{X}{\forall}\ 
				\underset{\issorted[(a\cons Y)\frown b]}{\underset{a,b,Y,Z}{\forall}}\ 
				\underset{U,V}{\exists}
				%\left{
					\begin{array}{l}
						\patsplit[a,b,X,Y,Z]=\langle U,V\rangle\ \wedge
						\issorted[U]\ \wedge\ U\not{=}\el\ \wedge\\
						\ms[a\cons(b\cons X)]\uplus\ms[Y]\uplus\ms[Z]=\ms[U]\uplus\ms[V]
					\end{array}
					%\right.
					\)
				}
			\end{lemma}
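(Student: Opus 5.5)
We prove the invariant by induction on $X$. Since every recursion clause of \patsplit\ consumes the head of $X$, the generalized induction over strict multiset inclusion suffices: for $c\cons X_0$ we may assume the statement for $X_0$, for \emph{all} $a,b,Y,Z$ satisfying the condition. The universal quantification over $a,b,Y,Z$ is placed inside the induction on $X$ precisely so that the hypothesis can be instantiated with the modified accumulator arguments. We then proceed by algorithm cases, one branch per clause of \patsplit. In each branch we take arbitrary $a_0,b_0,Y_0,Z_0$ with $\issorted[(a_0\cons Y_0)\frown b_0]$.

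\textbf{Termination clause} ($X=\el$). We choose the witnesses $U=(a_0\cons Y_0)\frown b_0$ and $V=Z_0$. The first conjunct holds by definition. The condition gives sortedness directly, and $U\neq\el$ is immediate since $U$ starts with $a_0$. For the multiset equation, the micro-terms strategy expands $\ms[a_0\cons(b_0\cons\el)]\uplus\ms[Y_0]\uplus\ms[Z_0]$ and $\ms[(a_0\cons Y_0)\frown b_0]\uplus\ms[Z_0]$ into unions of $\mse{a_0},\mse{b_0},\ms[Y_0],\ms[Z_0]$. Equality reduction then closes the goal, using commutativity and associativity of $\uplus$.

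\textbf{Recursion clauses} ($X=c_0\cons X_0$). We rewrite by the appropriate clause and apply the induction hypothesis to $X_0$ with the new arguments. This yields $U,V$ with the required equality, $\issorted[U]$, and $U\neq\el$, which we take as our witnesses. The remaining work is to discharge two obligations. First, the instantiated invariant condition must hold. Second, the multiset equation for the new arguments must imply the one for the old arguments; this is again micro-terms followed by equality reduction. The three clauses give the following.
\begin{itemize}
\item If $c_0\leq a_0$, we need $\issorted[(c_0\cons(a_0\cons Y_0))\frown b_0]$. This follows from $c_0\leq a_0$ and the old sortedness, via the micro-atoms decomposition $\issorted[c\cons W]\Leftrightarrow c\leq W\wedge\issorted[W]$. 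Here $c_0\leq W$ is obtained by inequality reduction, from $a_0\leq (a_0\cons Y_0)\frown b_0$ and transitivity. The multisets balance because $c_0$ moves from the consumed input into the accumulated sublist.
\item If $a_0<c_0<b_0$, the condition is unchanged. The multisets balance because $c_0$ moves into $Z$.
\item If $b_0\leq c_0$, we need $\issorted[(a_0\cons(Y_0\frown b_0))\frown c_0]$, i.e.\ a sorted list with an element appended that is at least as large as every element of it. For this we use a background-theory property analogous to the $\cons$ case: $\issorted[W\frown c]\Leftrightarrow \issorted[W]\wedge W\leq c$. The bound $W\leq c_0$ follows from $W\leq b_0$ (sortedness with last element $b_0$) and $b_0\leq c_0$.
\end{itemize}

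The main obstacle is twofold. In the recursion branches, the induction hypothesis must be quantified over the accumulators and instantiated correctly. In the append case, sortedness reasoning about the right end of the list requires the $\frown$-version of the sortedness decomposition, together with the fact that the last element of a sorted list bounds the whole list from above. If these lemmas are not yet in the theory, we would add them first and prove them by induction on $W$.
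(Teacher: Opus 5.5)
Your proof is correct and takes essentially the same approach as the paper: a head--tail induction on $X$ in which the quantification over $a,b,Y,Z$ stays inside the induction hypothesis, so it can be instantiated with the updated accumulators. This is followed by a case split along the clauses of \patsplit, where each recursion branch reduces to checking the new sortedness precondition and balancing the multisets.
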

			Since the function is defined by a simple head-tail recursion on $X$, the proof works by a simple head-tail induction on $X$, it is however the largest proof of the case study (55 steps).
			
			\smallskip
			\noindent
			{\bf \patsortIII}\ has a very simple correctness proof because it reduces immediately to the correctness of \patjoin, both having the two correctness properties in the same statement.
			The latter also plays the role of an invariant lemma, by generalizing the first argument:
			\begin{lemma}{PatJoin-correct.}
				
				\centerline{
					\(
					\begin{array}{c}
						\cfa{X,Z}{\issorted[X]}{( 
							\issorted[\patjoin[X,Z]] \ \wedge\ \ms[\patjoin[X,Z]]=\ms[X]\uplus\ms[Z]} )
					\end{array}\)
			}\end{lemma}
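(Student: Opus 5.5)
We prove PatJoin-correct by well-founded induction on the second argument $Z$, with respect to strict inclusion of multisets. The first argument $X$ is kept universally quantified inside the induction hypothesis, since it changes at each recursive call. The proof proceeds by algorithm cases, one branch for each clause of \patjoin\ (clauses 1--4, with clause 5 for \patmix\ unfolded inside clauses 3 and 4). Throughout we assume $\issorted[X_0]$ for arbitrary but fixed $X_0$.

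\emph{Termination clauses.} For $Z=\el$, clause 1 gives $\patjoin[X_0,\el]=X_0$. This is sorted by assumption, and $\ms[X_0]=\ms[X_0]\uplus\emptyset$ since $\ms[\el]=\emptyset$. For $Z=a_0\cons\el$, clause 2 gives $\merge[X_0,a_0\cons\el]$. We observe that $\issorted[a_0\cons\el]$ holds by the micro-atoms decomposition: $a_0\leq\el$ is vacuous and $\issorted[\el]$ is true. Then Merge-Sorted gives sortedness, and Merge-Preserves gives the multiset equality directly.

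\emph{Recursion clauses.} Take $Z=a_0\cons(b_0\cons Y_0)$ and suppose $a_0\leq b_0$; the case $a_0>b_0$ is symmetric, with the roles of $a_0,b_0$ swapped and the multiset $\ms[b_0\cons(a_0\cons Y_0)]$ rewritten by commutativity of $\uplus$. By Lemma~\ref{lemma:PatSplit-correct} we obtain Skolem constants $U_0,V_0$ with $\patsplit[a_0,b_0,Y_0,\el,\el]=\langle U_0,V_0\rangle$, $\issorted[U_0]$, $U_0\neq\el$, and $\ms[a_0\cons(b_0\cons Y_0)]=\ms[U_0]\uplus\ms[V_0]$. Rewriting with clauses 3 and 5, the main term becomes $\patjoin[\merge[X_0,U_0],V_0]$.

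Next we establish the two premises for using the induction hypothesis on this term. First, $\issorted[\merge[X_0,U_0]]$ follows from Merge-Sorted. Second, $\ms[V_0]\subset\ms[Z]$ holds because $\ms[Z]=\ms[U_0]\uplus\ms[V_0]$ and $U_0\neq\el$ forces $\ms[U_0]\neq\emptyset$. Only after these two facts are established do we instantiate the induction hypothesis with first argument $\merge[X_0,U_0]$ and second argument $V_0$. It yields that $\patjoin[\merge[X_0,U_0],V_0]$ is sorted and that its multiset equals $\ms[\merge[X_0,U_0]]\uplus\ms[V_0]$. Merge-Preserves turns this into $\ms[X_0]\uplus\ms[U_0]\uplus\ms[V_0]$, which equals $\ms[X_0]\uplus\ms[Z]$ by the PatSplit equality and associativity. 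The final step is closed by equality reduction.

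The main obstacle is the soundness of the induction. The recursion is not structural on $Z$: it goes through the result of \patsplit, so the hypothesis can only be generated after $V_0$ has been produced and its strict inclusion in $Z$ has been proven. This is exactly where the conjunct $U\neq\el$ of Lemma~\ref{lemma:PatSplit-correct} is essential. The other delicate point is that the first argument must be generalized to an arbitrary sorted list in the hypothesis, which is why the lemma is stated for all sorted $X$ rather than for $\el$ only.
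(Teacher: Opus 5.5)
Your proposal is correct and follows the paper's proof: well-founded induction on $Z$ with respect to strict multiset inclusion, with the sorted first argument universally quantified inside the hypothesis, and the hypothesis instantiated at $V_0$ only after $\ms[V_0]\subset\ms[Z_0]$ has been derived from PatSplit-correct, including its conjunct $U\neq\el$. Your explicit handling of the two termination clauses and of the symmetric case $a_0>b_0$ only spells out details that the paper leaves implicit.
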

			
			The induction scheme of the proof is based of the recursive definition of the function, which is essentially:
			$$\patjoin[X,Z] = \patjoin[\merge[X,U],V],$$
			where $U$ and $V$ are produced by \patsplit\ from $Z,$
			thus $\ms[V]$ is strictly included in $\ms[Z].$
			
			Therefore we first isolate $Z$ in a prefix universal quantifier and then we prove this formula by well-founded induction using an arbitrary but fixed $Z_0$, as well as the induction hypothesis:
			
			$$\cfa{V}{\ms[V]\subset\ms[Z_0]}{\ \cfa{X}{\issorted[X]}{ (\issorted[\patjoin[X,V]]\ \wedge\ 
					(\ms[\patjoin[X,V]]=\ms[X]\uplus\ms[V])}   
			}$$
			and the induction conclusion:
			$$\cfa{X}{\issorted[X]}{ (\issorted[\patjoin[X,Z_0]]\ \wedge\ 
				(\ms[\patjoin[X,Z_0]]=\ms[X]\uplus\ms[Z_0])}   
			$$
			During the proof, after $V_0$ is produced and one proves $\ms[V_0]\subset\ms[Z_0],$ then the induction hypothesis is instantiated to $V_0$, which makes the proof succeed.
			
			\subsection{Certifying Sorting Algorithms in Rocq}
			
			\subsubsection{Specifying the Properties of Sorting Algorithms}
			
			For the sake of simplicity, the sorting algorithms will be represented in Rocq as functions that take as input a list $l$ of naturals and return a permutation of $l$ that is sorted in increasing order. 
			
			The datatypes for naturals and lists are inductively defined, based on \textit{constructors}. The constructors for 
			\begin{itemize}
				\item naturals, of type \coqdocvar{nat}, are
				\coqdocvar{0} and \coqdocvar{S} (the successor function), and
				\item lists, of type \coqdocvar{list}, are
				\coqdocvar{nil} and \coqdocvar{cons}. Instead of \coqdocvar{cons}, we will use the {::} infix notation.
			\end{itemize}
			
			Rocq requires that every function be \textit{total}. It uses the \coqdockw{Fixpoint} keyword for recursive functions whose termination can be automatically proved. The functions map inputs to outputs using \textit{match} constructs representing sequences of the \coqdockw{match} keyword, an expression $e$, the \texttt{with} keyword, the match cases and the \texttt{end} keyword. The match cases should correspond to the constructor cases of the datatype of $e$. The output for each case is prefixed by the \ensuremath{\Rightarrow}  symbol and may contain other \emph{match} constructs to refine the behaviour of the function. The match-based approach allows Rocq to identify and display the missing cases.
			
			As an example, we can define the recursive function \coqdocvar{count}  that counts the number of occurrences of an element \coqdocvar{e} in a list \coqdocvar{lst}, as follows :
			
			\coqdocemptyline
			\coqdocnoindent
			\coqdockw{Fixpoint} \coqdocvar{count} \coqdocvar{e} \coqdocvar{lst} :=\coqdoceol
			\coqdocindent{1.00em}
			\coqdockw{match} \coqdocvar{lst} \coqdockw{with}\coqdoceol
			\coqdocindent{1.60em}
			\coqdocvar{nil} \ensuremath{\Rightarrow} 0\coqdoceol
			\coqdocindent{1.00em}
			\ensuremath{|} \coqdocvar{head} :: \coqdocvar{tail} \ensuremath{\Rightarrow} 
			\coqdockw{if} \coqdocvar{e} =? \coqdocvar{head} \coqdockw{then} \coqdocvar{S} (\coqdocvar{count} \coqdocvar{e} \coqdocvar{tail}) \coqdockw{else} \coqdocvar{count} \coqdocvar{e} \coqdocvar{tail}\coqdoceol
			\coqdocindent{1.00em}
			\coqdockw{end}.\coqdoceol
			\coqdocemptyline
			
			\noindent
			where \coqdocvar{=?} denotes the boolean equality.\\
			
			Recursive functions should terminate. Rocq can automatically check the termination property for a function $f$ if there is one of the arguments of $f$ that structurally decreases after each recursive call, as is, for example, the second argument of the  \coqdocvar{count} function.
			
			We are ready to define the permutation relation between two lists which holds if, for every element $x$, $x$ occurs the same number of times in both lists. When the input list does not have repeated elements, two lists are in a permutation relation if whenever an element occurs in one of the lists it also occurs in the other list, and viceversa. In this case, it is preferable to use the \coqdocvar{In} membership predicate, for which \coqdocvar{In} \coqdocvar{e} \coqdocvar{lst}  returns true if \coqdocvar{e} occurs in \coqdocvar{lst}. This is because the underlying reasoning is simpler and does not involve arithmetics. To take into account the two cases, our definition of permutation is the conjunction of the \coqdocvar{count}- and \coqdocvar{In}-based definitions:
			
			\coqdocemptyline
			\coqdocnoindent
			\coqdockw{Definition} \coqdocvar{permutation} \coqdocvar{lst} \coqdocvar{lst'} := 
			\coqdockw{\ensuremath{\forall}} \coqdocvar{e},  (\coqdocvar{count} \coqdocvar{e} \coqdocvar{lst} = \coqdocvar{count} \coqdocvar{e} \coqdocvar{lst'}) \ensuremath{\land} (\coqdocvar{In} \coqdocvar{e} \coqdocvar{lst} \ensuremath{\leftrightarrow} \coqdocvar{In} \coqdocvar{e} \coqdocvar{lst'}).\coqdoceol
			\coqdocemptyline
			
			The \coqdocvar{IsSorted} predicate, which tests whether a list is sorted in increasing order, can be defined inductively by using the keyword \coqdockw{Inductive}:  
			
			\coqdocemptyline
			\coqdocnoindent
			\coqdockw{Inductive} \coqdocvar{IsSorted} : \coqdocvar{list} \coqdocvar{nat} \ensuremath{\rightarrow} \coqdockw{Prop} :=\coqdoceol
			\coqdocindent{1.00em}
			\coqdocvar{snil} : \coqdocvar{IsSorted} \coqdocvar{nil}\coqdoceol
			\coqdocnoindent
			\ensuremath{|} \coqdocvar{s1} : \coqdockw{\ensuremath{\forall}} \coqdocvar{x}, \coqdocvar{IsSorted} (\coqdocvar{x}::\coqdocvar{nil})\coqdoceol
			\coqdocnoindent
			\ensuremath{|} \coqdocvar{s2} : \coqdockw{\ensuremath{\forall}} \coqdocvar{x} \coqdocvar{y} \coqdocvar{l}, \coqdocvar{IsSorted} (\coqdocvar{y}::\coqdocvar{l}) \ensuremath{\rightarrow} \coqdocvar{x} \ensuremath{\le} \coqdocvar{y} \ensuremath{\rightarrow} 
			\coqdocvar{IsSorted} (\coqdocvar{x}::\coqdocvar{y}::\coqdocvar{l}).\coqdoceol
			\coqdocemptyline

			Since a sorting algorithm can be any function $f$ that transforms a list of naturals into a permutation of it that is increasingly sorted, the soundness of $f$ can be defined as follows:
			
			\coqdocemptyline
			\coqdocnoindent
			\coqdockw{Definition} \coqdocvar{is\_a\_sorting\_algorithm} (\coqdocvar{f}: \coqdocvar{list} \coqdocvar{nat} \ensuremath{\rightarrow} \coqdocvar{list} \coqdocvar{nat}) 
			
			\hfill := 
			\coqdockw{\ensuremath{\forall}} \coqdocvar{l}, \coqdocvar{permutation} (\coqdocvar{f} \coqdocvar{l}) \coqdocvar{l} \ensuremath{\land} \coqdocvar{IsSorted} (\coqdocvar{f} \coqdocvar{l}).\coqdoceol
			\coqdocemptyline
			
			\noindent 
			
			Given a sorting algorithm $f$, the corresponding soundness conjecture is:
			
			\coqdocemptyline
			\coqdocnoindent
			\coqdockw{Theorem} \coqdocvar{f\_is\_correct}:
			\coqdocvar{is\_a\_sorting\_algorithm} \coqdocvar{f}.\coqdoceol
			\coqdocemptyline
			
			The proof of this theorem will consist in two parts: i) the `permutation' part proving that  \ensuremath{\forall} \coqdocvar{l}, \coqdocvar{permutation} (\coqdocvar{f} \coqdocvar{l}) \coqdocvar{l}, and ii) the `sorting' part proving that \ensuremath{\forall} \coqdocvar{l}, \coqdocvar{IsSorted} (\coqdocvar{f} \coqdocvar{l}).
			
			\centerline{}
			
			In the following, we show how to specify \coqdocvar{PatSort3} in Rocq and prove its soundness.  For lack of space, we present only the specification and the main lemmas needed to certify \coqdocvar{PatSort3}. The repository companion  contains the full specifications and the proofs for \coqdocvar{PatSort1}, \coqdocvar{PatSort2}, and \coqdocvar{PatSort3} using Rocq.
			
			\subsubsection{Specifying and Certifying \coqdocvar{PatSort3}}

			The function \coqdocvar{PatSplit} is defined as follows:\\

			% \coqdocnoindent
			% \coqdockw{Lemma} \coqdocvar{sorted\_conc}: \coqdockw{\ensuremath{\forall}} \coqdocvar{l1} \coqdocvar{l2}, \coqdocvar{sorted} (\coqdocvar{l1} ++ \coqdocvar{l2}) \ensuremath{\rightarrow} (\coqdocvar{sorted} \coqdocvar{l1} \ensuremath{\land} \coqdocvar{sorted} \coqdocvar{l2}).\coqdoceol
			%  \coqdocemptyline
			
			% \coqdocnoindent
			% \coqdockw{Lemma} \coqdocvar{Permutation\_permutation}: \coqdockw{\ensuremath{\forall}} \coqdocvar{l1} \coqdocvar{l2}, \coqdocvar{permutation} \coqdocvar{l1} \coqdocvar{l2} \ensuremath{\leftrightarrow} \coqdocvar{Permutation} \coqdocvar{l1} \coqdocvar{l2} .\coqdoceol
			%  \coqdocemptyline
			
			% \coqdocnoindent
			% \coqdockw{Lemma} \coqdocvar{permutation\_Merge} : \coqdockw{\ensuremath{\forall}} \coqdocvar{l1} \coqdocvar{l2}, \coqdocvar{Permutation} (\coqdocvar{Merge} (\coqdocvar{l1}, \coqdocvar{l2})) (\coqdocvar{l1} ++ \coqdocvar{l2}).\coqdoceol
			% \coqdocemptyline
			\coqdocnoindent
			\coqdockw{Fixpoint} \coqdocvar{PatSplit} \coqdocvar{a} \coqdocvar{b} \coqdocvar{l} \coqdocvar{Y} \coqdocvar{Z} :=\coqdoceol
			\coqdocindent{1.00em}
			\coqdockw{match} \coqdocvar{l} \coqdockw{with}\coqdoceol
			\coqdocindent{2.00em}
			\coqdocvar{nil} \ensuremath{\Rightarrow} ((\coqdocvar{a}::\coqdocvar{Y}) ++ [\coqdocvar{b}], \coqdocvar{Z})\coqdoceol
			\coqdocindent{1.00em}
			\ensuremath{|} \coqdocvar{c} :: \coqdocvar{X} \ensuremath{\Rightarrow} \coqdockw{if} \coqdocvar{c} <=? \coqdocvar{a} \coqdockw{then} 
			\coqdocvar{PatSplit} \coqdocvar{c} \coqdocvar{b} \coqdocvar{X} (\coqdocvar{a}::\coqdocvar{Y}) \coqdocvar{Z}\coqdoceol
			\coqdocindent{6em}
			\coqdockw{else}
			\coqdockw{if} \coqdocvar{b} <=? \coqdocvar{c} \coqdockw{then} 
			\coqdocvar{PatSplit} \coqdocvar{a} \coqdocvar{c} \coqdocvar{X} (\coqdocvar{Y} ++ [\coqdocvar{b}]) \coqdocvar{Z}\coqdoceol
			\coqdocindent{8.5em}
			\coqdockw{else}
			\coqdocvar{PatSplit} \coqdocvar{a} \coqdocvar{b} \coqdocvar{X} \coqdocvar{Y} (\coqdocvar{c}::\coqdocvar{Z})\coqdoceol
			\coqdocindent{1.00em}
			\coqdockw{end}.\coqdoceol
			\coqdocemptyline
			\coqdocnoindent

			\noindent
			where ++ (<=?) is the infix notation of the list concatenation (boolean `less-or-equal', resp.). \\
			
			The definition of \coqdocvar{PatSort3} is based on the  auxiliary recursive function \coqdocvar{PatJoin}. Since its termination cannot be proved automatically, the use of the  \coqdockw{Function} keyword leaves the possibility to the user to specify the decreasing argument and the well-founded ordering, then to prove that the argument is indeed decreasing w.r.t the well-founded ordering in the recursive calls:\\
			
			\coqdocemptyline
			\coqdocnoindent
			\coqdockw{Function} \coqdocvar{PatJoin} (\coqdocvar{X}:\coqdocvar{list} \coqdocvar{nat}) \coqdocvar{l} \{\coqdockw{wf} (\coqdockw{fun} (\coqdocvar{l1}: \coqdocvar{list} \coqdocvar{nat}) (\coqdocvar{l2}: \coqdocvar{list} \coqdocvar{nat}) \ensuremath{\Rightarrow} \coqdoceol
			\coqdocindent{24.00em} \coqdocvar{Nat.lt} (\coqdocvar{length} \coqdocvar{l1}) (\coqdocvar{length} \coqdocvar{l2})) \coqdocvar{l}\} :=\coqdoceol
			\coqdocindent{1.00em}
			\coqdockw{match} \coqdocvar{l} \coqdockw{with} \coqdoceol
			\coqdocindent{2.00em}
			\coqdocvar{nil} \ensuremath{\Rightarrow} \coqdocvar{X}\coqdoceol
			\coqdocindent{1.00em}
			\ensuremath{|} \coqdocvar{a} :: \coqdocvar{l'} \ensuremath{\Rightarrow} \coqdoceol
			\coqdocindent{3.00em}
			\coqdockw{match} \coqdocvar{l'} \coqdockw{with} \coqdoceol
			\coqdocindent{4.00em}
			\coqdocvar{nil} \ensuremath{\Rightarrow} \coqdocvar{Merge} (\coqdocvar{X}, [\coqdocvar{a}])\coqdoceol
			\coqdocindent{3.00em}
			\ensuremath{|} \coqdocvar{b} :: \coqdocvar{Y}  \ensuremath{\Rightarrow} \coqdoceol
			\coqdocindent{6.00em}
			\coqdockw{let} (\coqdocvar{U},\coqdocvar{V}) := \coqdoceol
			\coqdocindent{7.00em}
			\coqdockw{if} \coqdocvar{a} <=? \coqdocvar{b} \coqdockw{then} \coqdocvar{PatSplit} \coqdocvar{a} \coqdocvar{b} \coqdocvar{Y} \coqdocvar{nil} \coqdocvar{nil} \coqdoceol
			\coqdocindent{7.00em}
			\coqdockw{else} \coqdocvar{PatSplit} \coqdocvar{b} \coqdocvar{a} \coqdocvar{Y} \coqdocvar{nil} \coqdocvar{nil} \coqdoceol
			\coqdocindent{6.00em}
			\coqdoctac{in} \coqdocvar{PatJoin} (\coqdocvar{Merge} (\coqdocvar{X}, \coqdocvar{U})) \coqdocvar{V}\coqdoceol
			\coqdocindent{3.00em}
			\coqdockw{end}\coqdoceol
			\coqdocindent{1.00em}
			\coqdockw{end}.\coqdoceol
			\coqdocemptyline
			The \coqdocvar{PatJoin} function satisfies the following crucial properties:
			\coqdocemptyline
			\coqdocnoindent
			\coqdockw{Lemma} \coqdocvar{permutation\_PatJoin}: \coqdockw{\ensuremath{\forall}} \coqdocvar{U} \coqdocvar{V}, \coqdocvar{permutation} (\coqdocvar{PatJoin} \coqdocvar{U} \coqdocvar{V}) (\coqdocvar{U} ++ \coqdocvar{V}).\coqdoceol
			\coqdocemptyline
			\coqdocnoindent
			\coqdockw{Lemma} \coqdocvar{sorted\_PatJoin}: \coqdockw{\ensuremath{\forall}} \coqdocvar{U} \coqdocvar{V}, \coqdocvar{IsSorted} \coqdocvar{U} \ensuremath{\rightarrow}  \coqdocvar{IsSorted} (\coqdocvar{PatJoin} \coqdocvar{U} \coqdocvar{V}).\coqdoceol
			
			\coqdocemptyline
			The \coqdocvar{PatSort3} function is a particular case of \coqdocvar{PatJoin} when the first argument is \coqdocvar{nil}:
			
			\coqdocemptyline
			\coqdocnoindent
			\coqdockw{Definition} \coqdocvar{PatSort3} \coqdocvar{l} := \coqdocvar{PatJoin} \coqdocvar{nil} \coqdocvar{l}.\coqdoceol
			\coqdocemptyline
			
			Finally, the proof of the main theorem \coqdocvar{PatSort3\_is\_correct}, stating that \coqdocvar{is\_a\_sorting\_algorithm} \coqdocvar{PatSort3} holds and specified in Appendix~\ref{sec:lemmas}, is based on the following lemmas:
			
			\coqdocemptyline
			\coqdocnoindent
			\coqdockw{Lemma} \coqdocvar{PatSort3\_permutation} : \coqdockw{\ensuremath{\forall}} \coqdocvar{l}, \coqdocvar{permutation} (\coqdocvar{PatSort3} \coqdocvar{l}) \coqdocvar{l}.\coqdoceol
			\coqdocemptyline
			\coqdocnoindent
			\coqdockw{Lemma} \coqdocvar{PatSort3\_is\_sorted} : \coqdockw{\ensuremath{\forall}} \coqdocvar{l}, \coqdocvar{IsSorted} (\coqdocvar{PatSort3} \coqdocvar{l}).\coqdoceol
			\coqdocemptyline
			
			\centerline{}
			
			The specification of the \coqdockw{Merge} function and the auxiliary lemmas used in the Rocq proof are also given in Appendix~\ref{sec:lemmas}.
			
			\centerline{}
			
			The specifications and the proofs used the modules \coqdockw{Arith}, \coqdockw{List}, and \coqdockw{Lia} (a decision procedure for Linear Integer Arithmetic) from the \coqdockw{Stdlib} library. Most of the proofs used induction reasoning. The induction schemas were built by Rocq from the definition of \coqdockw{List} and from the defined recursive functions \coqdockw{Merge} and \coqdockw{PatJoin}. For the last case, we have used the modules   \coqdockw{FunInd}, \coqdockw{Recdef} and \coqdockw{Wellfounded} from \coqdockw{Stdlib}. For example, the functional scheme associated to \coqdockw{PatJoin} can be defined as: 
			
			\coqdocemptyline
			\coqdocnoindent
			\coqdockw{Functional Scheme} \coqdocvar{PatJoin\_ind} := \coqdockw{Induction} \coqdockw{for} \coqdocvar{PatJoin} \coqdockw{Sort} \coqdockw{Prop}.\coqdoceol
			\coqdocemptyline
			
			\noindent
			and used in the proof by the command \coqdockw{apply} \coqdocvar{PatJoin\_ind}.

			\section{Conclusions and Further Work}
			\subsection{Comparison between Theorema and Rocq} 
			The main differences between the two formal systems appear in algorithm definition, proof development, and proof presentation.
			We detail each aspect below. \\
			
			\noindent\textbf{Algorithm definition.} The types in \tma\ are not explicitly declared, but they are based on notation conventions.
			\tma\ can accept partial functions, such as the ones
			that are not defined when the argument is an empty list. On the other
			hand, the Rocq specifications are typed and accept only total
			functions. In Rocq, partial functions can be handled using the \texttt{option} type~\cite{tokyo}.
			
			Another important difference between systems is the following: in Rocq one can only define terminating functions, while in \tma\  this is not necessary.
			
			In more detail about the concrete approach, in \tma\ one uses multisets in order to compare the contents of lists, while in Rocq this is done using the occurrence count. In Rocq, one could also consider the permutation definition given in the \texttt{Sorting.Permutation} library, by taking advantage of the companion theorems, to ease the proof effort. The use of parameterized lists, with  elements belonging to some totally ordered domain, instead of lists of naturals is also possible. 
            
            Both in \tma\ and Rocq, the certification process required only the definition of the algorithms.
            The \tma\ proofs use some of the functions, predicates, and their properties that are specified in the background theory file, most of them being developed on the occasion of this case study. These are however not specific to the particular algorithms that are investigated here, but are referring in general to ordered domains and multisets and lists over them.
            The Rocq certification is in a similar situation as it uses the appropriate definitions and properties from the relevant libraries related to lists and natural numbers.
			
			\noindent\textbf{Proof development.} The proofs in both Rocq and \tma\ systems have been built using the \emph{cascading} proof strategy: when a proof fails on some subgoal, this is converted to a lemma whose
			proof may be based on other lemmas that are expected to be simpler to prove or are already proved.
            In Rocq some lemmas can be directly accessed from the standard library that comes with the system.
            However, many necessary lemmas had to be created by the user.
			In \tma\ the certification process does not need a specific domain of ordered elements, but we use inference rules which are based on the natural properties of total order (e.g. transitivity). 
            However, the test of the algorithms uses natural numbers as implemented in \textit{Mathematica}.
            
			In Rocq, once a proofs is finished, it is certified by the kernel of the system using the Curry-Howard isomorphism, therefore the proofs are automatically ensured to be logically correct.
            In contrast, in \tma\ it is possible to introduce inference rules that are not logically correct, therefore one has to take much care when designing them.
            
            In Rocq, the proof effort consisted in producing the proof of 22 lemmas and one theorem. Its final version required more than 2000 user interactions\footnote{A Rocq interaction consists essentially in providing a keyword that will trigger the execution of some inference steps.}.
            The \tma\ proofs total a number of 295 inference steps.
            %\marginpar{to check for Theorema.}\\
			
			\noindent\textbf{Proof presentation.}
			The proof presentation is completely different in the two systems.
			
			The \tma\ proofs are generated in natural style, are easy to read as they are similar to human proofs. In contrast, in Rocq the user needs the computer to run the proof scripts step by step which displays the current state of the proof. The generation of Rocq scripts was highly interactive. However, Rocq gives the opportunity to the user to automate the proof process by using tactics and solvers for decision problems.
			
			The Rocq system is producing proofs that are not intended for human reading, but only for further automatic checking.
			Therefore what the user can see is just the produced script.
			
			\subsection{Future Work}
			
            Based on the iterative improvement of the Patience Sort algorithm presented here, one can study the possibility of finding precise rules for this process, finally leading to the automatic algorithm transformation methods.
			
			Another interesting possible follow-up of this case study is the systematic investigation of the principles that are used in creating the lemmas necessary for the proofs, in order to automate this creative process.
			
			The parallel approach to these proofs in both systems constitute a case study for further investigation of possible interactions between the two systems, as:
			\begin{itemize}
				\item the transformation of Rocq proof scripts in \tma-like natural-style proofs,
				\item the generation of Rocq scripts by the \tma\ system,
				\item the verification of \tma\ proofs using Rocq.
			\end{itemize}

\section*{Acknowledgements}

This work is co-funded by the EU through the Erasmus+ project \emph{AiRobo: Artificial Intelligence-based Robotics}, 2023-1-RO01-KA220-HED-000152418.

			\bibliographystyle{eptcs}
			\bibliography{WUT-biblio,
				WUT-biblio-ARC.bib,
				UL-biblio.bib}
			%\bibliography{lipics-v2021-sample-article}
			
			\newpage
			\appendix
			\section{Other functions and lemmas used in the Rocq proof}
			\label{sec:lemmas}
			
			\coqdocemptyline
			\coqdocnoindent
			\coqdockw{Lemma} \coqdocvar{sorted\_sorted}: \coqdockw{\ensuremath{\forall}} \coqdocvar{a} \coqdocvar{l}, \coqdocvar{IsSorted} (\coqdocvar{a}::\coqdocvar{l}) \ensuremath{\rightarrow} \coqdocvar{IsSorted} \coqdocvar{l}.\coqdoceol
			\coqdocemptyline
			\coqdocnoindent
			\coqdockw{Lemma}  \coqdocvar{sorted\_sorted\_rev}: \coqdockw{\ensuremath{\forall}} \coqdocvar{a} \coqdocvar{l}, \coqdocvar{IsSorted} \coqdocvar{l}  \ensuremath{\rightarrow} (\coqdockw{\ensuremath{\forall}} \coqdocvar{y}, \coqdocvar{In} \coqdocvar{y} \coqdocvar{l} \ensuremath{\rightarrow} \coqdocvar{a} \ensuremath{\le} \coqdocvar{y}) \ensuremath{\rightarrow}  \coqdocvar{IsSorted} (\coqdocvar{a}::\coqdocvar{l}).\coqdoceol
			\coqdocemptyline
			\coqdocnoindent
			\coqdockw{Lemma} \coqdocvar{sorted\_app} : \coqdockw{\ensuremath{\forall}} \coqdocvar{y} \coqdocvar{l}, (\coqdockw{\ensuremath{\forall}} \coqdocvar{x}, \coqdocvar{In} \coqdocvar{x} \coqdocvar{l} \ensuremath{\rightarrow} \coqdocvar{Nat.le} \coqdocvar{x} \coqdocvar{y}) \ensuremath{\rightarrow} \coqdocvar{IsSorted} (\coqdocvar{l}) \ensuremath{\rightarrow} \coqdocvar{IsSorted} (\coqdocvar{l} ++ [\coqdocvar{y}]).\coqdoceol
			\coqdocemptyline
			\coqdocnoindent
			\coqdockw{Lemma} \coqdocvar{sorted\_conc}: \coqdockw{\ensuremath{\forall}} \coqdocvar{l1} \coqdocvar{l2}, \coqdocvar{IsSorted} (\coqdocvar{l1} ++ \coqdocvar{l2}) \ensuremath{\rightarrow} (\coqdocvar{IsSorted} \coqdocvar{l1} \ensuremath{\land} \coqdocvar{IsSorted} \coqdocvar{l2}).\coqdoceol
			\coqdocemptyline
			\coqdocnoindent
			\coqdockw{Lemma} \coqdocvar{count\_app} : \coqdockw{\ensuremath{\forall}} \coqdocvar{x} \coqdocvar{l1} \coqdocvar{l2}, \coqdocvar{count} \coqdocvar{x} (\coqdocvar{app} \coqdocvar{l1} \coqdocvar{l2}) = \coqdocvar{count} \coqdocvar{x} \coqdocvar{l1} + (\coqdocvar{count} \coqdocvar{x} \coqdocvar{l2}).\coqdoceol
			\coqdocemptyline
			\coqdocnoindent
			\coqdockw{Function} \coqdocvar{Merge} \coqdocvar{l} \{\coqdockw{wf} (\coqdockw{fun} (\coqdocvar{l}:(\coqdocvar{list} \coqdocvar{nat} \ensuremath{\times} \coqdocvar{list} \coqdocvar{nat})) \coqdoceol
			\coqdocindent{13.00em}
			(\coqdocvar{l'}:(\coqdocvar{list} \coqdocvar{nat} \ensuremath{\times} \coqdocvar{list} \coqdocvar{nat}))  \ensuremath{\Rightarrow} \coqdoceol
			\coqdocindent{12.00em}
			\coqdocvar{Nat.lt} ((\coqdocvar{length} (\coqdocvar{fst} \coqdocvar{l})) + (\coqdocvar{length} (\coqdocvar{snd} \coqdocvar{l}))) \coqdoceol
			\coqdocindent{13.00em}
			((\coqdocvar{length} (\coqdocvar{fst} \coqdocvar{l'})) + \coqdocvar{length} (\coqdocvar{snd} \coqdocvar{l'}))) \coqdocvar{l}\} := \coqdoceol
			\coqdocindent{1.00em}
			\coqdockw{match} \coqdocvar{l} \coqdockw{with}\coqdoceol
			\coqdocindent{2.00em}
			(\coqdocvar{nil}, \coqdocvar{nil}) \ensuremath{\Rightarrow} \coqdocvar{nil}\coqdoceol
			\coqdocindent{1.00em}
			\ensuremath{|} (\coqdocvar{nil}, \coqdocvar{l2}) \ensuremath{\Rightarrow} \coqdocvar{l2}\coqdoceol
			\coqdocindent{1.00em}
			\ensuremath{|} (\coqdocvar{l1}, \coqdocvar{nil}) \ensuremath{\Rightarrow} \coqdocvar{l1}\coqdoceol
			\coqdocindent{1.00em}
			\ensuremath{|} (\coqdocvar{a} ::\coqdocvar{tl}, \coqdocvar{b}::\coqdocvar{tl'}) \ensuremath{\Rightarrow} \coqdockw{if} \coqdocvar{a} <=? \coqdocvar{b} \coqdockw{then} \coqdocvar{a} :: (\coqdocvar{Merge} (\coqdocvar{tl}, \coqdocvar{b}::\coqdocvar{tl'})) \coqdoceol
			\coqdocindent{12.00em}
			\coqdockw{else} \coqdocvar{b} :: (\coqdocvar{Merge} (\coqdocvar{a} ::\coqdocvar{tl}, \coqdocvar{tl'}))\coqdoceol
			\coqdocindent{1.00em}
			\coqdockw{end}.\coqdoceol
			\coqdocemptyline
			\coqdocnoindent
			\coqdockw{Lemma} \coqdocvar{Merge\_nil}: \coqdockw{\ensuremath{\forall}} \coqdocvar{l}, \coqdocvar{Merge} (\coqdocvar{nil}, \coqdocvar{l}) = \coqdocvar{l}.\coqdoceol
			\coqdocemptyline
			\coqdocnoindent
			\coqdockw{Lemma} \coqdocvar{Merge\_In} : \coqdockw{\ensuremath{\forall}} \coqdocvar{x} (\coqdocvar{l1}:\coqdocvar{list} \coqdocvar{nat}) (\coqdocvar{l2}: \coqdocvar{list} \coqdocvar{nat}),  ((\coqdocvar{In} \coqdocvar{x} \coqdocvar{l1}) \ensuremath{\lor} (\coqdocvar{In} \coqdocvar{x} \coqdocvar{l2})) \ensuremath{\leftrightarrow} \coqdoceol
			\coqdocindent{20.50em}
			\coqdocvar{In} \coqdocvar{x} (\coqdocvar{Merge} (\coqdocvar{l1}, \coqdocvar{l2})).\coqdoceol
			\coqdocemptyline
			\coqdocnoindent
			\coqdockw{Lemma} \coqdocvar{count\_Merge} : \coqdockw{\ensuremath{\forall}} \coqdocvar{l1} \coqdocvar{l2} \coqdocvar{x},  \coqdocvar{count} \coqdocvar{x}  (\coqdocvar{Merge} (\coqdocvar{l1}, \coqdocvar{l2})) = \coqdoceol
			\coqdocindent{19.50em}
			\coqdocvar{count} \coqdocvar{x} \coqdocvar{l1}  + \coqdocvar{count} \coqdocvar{x} \coqdocvar{l2}.\coqdoceol
			\coqdocemptyline
			\coqdocnoindent
			\coqdockw{Lemma} \coqdocvar{permutation\_Merge} : \coqdockw{\ensuremath{\forall}} \coqdocvar{l1} \coqdocvar{l2}, \coqdocvar{permutation} (\coqdocvar{Merge} (\coqdocvar{l1}, \coqdocvar{l2})) (\coqdocvar{l1} ++ \coqdocvar{l2}).\coqdoceol
			\coqdocemptyline
			\coqdocnoindent
			\coqdockw{Lemma} \coqdocvar{sorted\_le} : \coqdockw{\ensuremath{\forall}} \coqdocvar{x} \coqdocvar{l}, \coqdocvar{IsSorted} (\coqdocvar{x} :: \coqdocvar{l}) \ensuremath{\rightarrow} \coqdockw{\ensuremath{\forall}} \coqdocvar{y}, \coqdocvar{In} \coqdocvar{y} \coqdocvar{l} \ensuremath{\rightarrow} \coqdocvar{x} \ensuremath{\le} \coqdocvar{y}.\coqdoceol
			\coqdocemptyline
			\coqdocnoindent
			\coqdockw{Lemma} \coqdocvar{Merge\_sorted} : \coqdockw{\ensuremath{\forall}} (\coqdocvar{l1}:\coqdocvar{list} \coqdocvar{nat}) (\coqdocvar{l2}: \coqdocvar{list} \coqdocvar{nat}), \coqdocvar{IsSorted} \coqdocvar{l1} \ensuremath{\rightarrow} \coqdoceol
			\coqdocindent{18.00em}
			\coqdocvar{IsSorted} \coqdocvar{l2} \ensuremath{\rightarrow} \coqdocvar{IsSorted}(\coqdocvar{Merge} (\coqdocvar{l1},\coqdocvar{l2})).\coqdoceol
			\coqdocemptyline
			\coqdocnoindent
			\coqdockw{Lemma} \coqdocvar{sorted\_cons\_del}: \coqdockw{\ensuremath{\forall}} \coqdocvar{a} \coqdocvar{b} \coqdocvar{l}, \coqdocvar{IsSorted} (\coqdocvar{a} :: \coqdocvar{b} :: \coqdocvar{l}) \ensuremath{\rightarrow} \coqdocvar{IsSorted} (\coqdocvar{a} :: \coqdocvar{l}).\coqdoceol
			\coqdocemptyline
			\coqdocnoindent
			\coqdockw{Lemma} \coqdocvar{sorted\_app\_del}: \coqdockw{\ensuremath{\forall}} \coqdocvar{a} \coqdocvar{l1} \coqdocvar{l2}, \coqdocvar{IsSorted} (\coqdocvar{a} :: (\coqdocvar{l1} ++ \coqdocvar{l2})) \ensuremath{\rightarrow} \coqdocvar{IsSorted} (\coqdocvar{a} :: \coqdocvar{l2}).\coqdoceol
			\coqdocemptyline
			\coqdocnoindent
			\coqdockw{Lemma} \coqdocvar{PatSplit\_sorted} : \coqdockw{\ensuremath{\forall}} \coqdocvar{a} \coqdocvar{b} \coqdocvar{l} \coqdocvar{l1} \coqdocvar{l2}, \coqdocvar{a} \ensuremath{\le} \coqdocvar{b} \ensuremath{\rightarrow} \coqdocvar{IsSorted} (\coqdocvar{a} :: \coqdocvar{l1} ++ [\coqdocvar{b}]) \ensuremath{\rightarrow} \coqdoceol
			\coqdocindent{18.00em}\coqdocvar{IsSorted} (\coqdocvar{fst} (\coqdocvar{PatSplit} \coqdocvar{a} \coqdocvar{b} \coqdocvar{l} \coqdocvar{l1} \coqdocvar{l2})).\coqdoceol
			\coqdocemptyline
			\coqdocnoindent
			\coqdockw{Lemma} \coqdocvar{PatSplit\_A\_notempty} : \coqdockw{\ensuremath{\forall}} \coqdocvar{a} \coqdocvar{b} \coqdocvar{l} \coqdocvar{Y} \coqdocvar{Z}, \coqdocvar{fst}(\coqdocvar{PatSplit} \coqdocvar{a} \coqdocvar{b} \coqdocvar{l} \coqdocvar{Y} \coqdocvar{Z}) \ensuremath{\not=} \coqdocvar{nil}.\coqdoceol
			\coqdocemptyline
			\coqdocnoindent
			\coqdockw{Lemma} \coqdocvar{PatSplit\_permutation} : \coqdockw{\ensuremath{\forall}} \coqdocvar{a} \coqdocvar{b} \coqdocvar{l} \coqdocvar{Y} \coqdocvar{Z}, \coqdocvar{permutation} (((\coqdocvar{a} :: \coqdocvar{b} :: \coqdocvar{l}) ++ \coqdocvar{Y}) ++ \coqdocvar{Z}) ((\coqdocvar{fst} \coqdoceol
			\coqdocindent{8.00em}(\coqdocvar{PatSplit} \coqdocvar{a} \coqdocvar{b} \coqdocvar{l} \coqdocvar{Y} \coqdocvar{Z})) ++ (\coqdocvar{snd} (\coqdocvar{PatSplit} \coqdocvar{a} \coqdocvar{b} \coqdocvar{l} \coqdocvar{Y} \coqdocvar{Z}))).\coqdoceol
			\coqdocemptyline
			\coqdocnoindent
			\coqdockw{Theorem} \coqdocvar{PatSort3\_is\_correct} : \coqdocvar{is\_a\_sorting\_algorithm} \coqdocvar{PatSort3}.\coqdoceol
			
		\end{document}